\theoremstyle{remark}
\newtheorem{remark}{Remark}
\theoremstyle{definition}
\newtheorem{theorem}{Theorem}
\theoremstyle{definition}
\DeclareMathOperator{\tr}{tr}
\DeclareMathOperator{\diag}{diag}
\DeclareMathOperator{\var}{Var}
\DeclareMathOperator{\vect}{vec}
\DeclareMathOperator{\rank}{rank}
\title{A Bayesian Generalized Bridge Regression Approach to Covariance Estimation in the Presence of Covariates}
\author{Christina Zhao, Ding Xiang, Galin L. Jones, and Adam J. Rothman}
\date{\today}
\begin{document}

\maketitle

\begin{abstract}
A hierarchical Bayesian approach that permits simultaneous inference for the regression coefficient matrix and the error precision (inverse covariance) matrix in the multivariate linear model is proposed.  Assuming a natural ordering
of the elements of the response, the precision matrix is reparameterized so it can be estimated with univariate-response linear regression techniques. A novel generalized bridge regression prior that accommodates both sparse and dense settings and is competitive with alternative methods for univariate-response regression is proposed and used in this framework. Two component-wise Markov chain Monte Carlo algorithms are developed for sampling, including a data augmentation algorithm based on a scale mixture of normals representation. Numerical examples demonstrate that the proposed method is competitive with comparable joint mean-covariance models, particularly in estimation of the precision matrix. 
The method is also used to estimate the $253 \times 253$ precision matrix of 90,670 spectra extracted from images taken by the Hubble Space Telescope, demonstrating its computational feasibility for problems with large $n$ and $q$. 
\end{abstract}

\section{Introduction}\label{sec:intro}
The simultaneous modeling of multiple numerical response variables is a fundamental problem. 
Examples include predicting infrared spectra from chemical structures \citep{saqueretal2024} and modeling gravitational waves \citep{engelsetal2014}. 
A Bayesian shrinkage estimation framework is proposed under the following model. Let $Y_i$ be a $q$-variate response random vector and $X_i\in\mathbb{R}^p$ be the associated covariate vector for the $i$th subject, and let $B\in\mathbb{R}^{p\times q}$ be the regression coefficient matrix. Define $\varepsilon_i$ to be a $q$-variate latent random vector with zero mean and positive definite, diagonal covariance matrix $D$ and $L\in\mathbb{R}^{q\times q}$ to be a lower triangular matrix with ones on its diagonal. The distribution of $Y_i$ is described by the multivariate linear model
\begin{equation}\label{eq:multimeancov}
Y_i = B^\top X_i + L \varepsilon_i, \quad i=1,\ldots, n.
\end{equation}
Denote the covariance matrix of the $Y_i$'s by $\Omega^{-1}$.

Simultaneous estimation of $(B,\Omega)$ is desirable, as accounting for correlation between the responses can improve prediction \citep{breimanfriedman1997} and improve shrinkage estimation of $B$ \citep{rothmanetal2010multireg}. Two major challenges in this problem are the number of parameters, which increase linearly in $p$ and quadratically in $q$, and the positive definite constraint on $\Omega$. To address the dimensionality of the problem, many methods impose sparsity on $B$ and $\Omega$, often through extending methods for sparse univariate-response linear regression. Several frequentist methods use a penalized likelihood approach with the $l_1$ penalty imposed on both $B$ and $\Omega$ \citep{rothmanetal2010multireg, lee12, caietal2013}. Bayesian approaches include the multivariate spike-and-slab LASSO (mSSL) \citep{deshpandeetal2019} and the horseshoe-graphical horseshoe (HS-GHS) \citep{lietal2021}, which specify priors designed for sparse univariate-response regression in an element-wise fashion on $B$ and $\Omega$. \cite{bhadramallick2013} and \cite{bottoloetal2021} both specify spike-and-slab priors on $B$ and a hyper inverse Wishart distribution on $\Omega$, but the algorithms focus on structure learning and only estimate the sparsity structure of $\Omega$.

In contrast to the previously mentioned approaches, which enforce positive-definiteness of $\Omega$ by constraining the prior, the hierarchical model proposed here guarantees positive-definiteness by construction. This is achieved through a reparameterization of $\Omega$.
Define $T = L^{-1}$, which is also lower triangular with ones on its diagonal. 
The modified Cholesky decomposition is given by 
\begin{equation}\label{eq:modcholprec}
\Omega=T^\top D^{-1}T.
\end{equation}
The parameters in $T$ are not subject to constraint, and as long as the diagonal elements of $D$ are positive, $\Omega$ is guaranteed to be positive definite and symmetric. \cite{pourahmadi1999} showed that the parameters of $T$ and $D$ can be estimated through a sequence of $q$ autoregressions. As a consequence, the resulting estimator of $\Omega$ is not invariant to permutations of the variables and thus is more suited to applications where $Y_i$ has a natural ordering, such as longitudinal or spatial data \citep{wu03, kiddkatzfuss2021}, though it has been applied to financial returns data for computing the covariance matrix for $q$ assets \citep{carvalhoetal2010,gramacypantaleo2010}. For applications where a natural ordering is not known, a method such as the Isomap algorithm \citep{wagamanlevina2009} may be used to first find a structured ordering, or a final estimate for $\Omega$ can also be constructed by pooling estimates from multiple permutations \citep{kangdeng2020,zhengetal2017}. However, a natural ordering of the variables is assumed to be available here.

A variety of methods in the covariance estimation literature have leveraged the modified Cholesky decomposition to extend univariate-response regression priors to estimating $\Omega$. \cite{danielspourahmadi2002} used conjugate priors without making sparsity assumptions, while many others assume a banded structure \citep{kiddkatzfuss2021,leelee2021,leelin2023} or an arbitrary sparsity pattern for $T$ \citep{smithkohn2002,leeetal2019} as part of prior specification. \cite{gramacypantaleo2010} consider ridge and lasso priors, but penalized regression approaches are often considered under a frequentist paradigm \citep{huangetal2006,levinaetal2008}. Many of these methods focus only on covariance estimation and estimate $\mathbb{E}(Y_i)$ with the sample mean; among the methods mentioned here, only \cite{danielspourahmadi2002} and \cite{smithkohn2002} also consider mean estimation in the presence of covariates.

Although a variety of regression priors have been considered, practitioners rarely know a priori whether they are in a dense or sparse setting, and sparse methods are not always optimal; in univariate-response regression, it is well established that lasso and ridge regression perform best in different settings \citep{tibshirani1996,fu1998}. A novel penalized regression prior that addresses this by incorporating both sparse and dense settings in a single prior is proposed here. Whereas use of the ridge and lasso regression penalties would require the practitioner to consider penalty selection in model fitting, the proposed prior addresses this nuisance parameter using a type of model averaging approach. Furthermore, the range of the penalty is extended beyond the commonly used range of $[1,2]$. Allowing penalty values less than 1 encourages sparsity when appropriate, while values greater than 2 improve performance in dense settings. The proposed generalized bridge (GBR) prior also uses local shrinkage parameters. This results in a local-global interpretation and leads to the same sort of tail-robustness properties enjoyed by the horseshoe \citep{carvalhoetal2010}. In univariate-response estimation and prediction, the GBR prior is competitive with the horseshoe and the spike-and-slab lasso \citep{rockovageorge2018}. 

Extending the GBR prior to the multivariate linear regression model in equation \eqref{eq:multimeancov} is conceptually straightforward. When $\Omega$ is reparameterized using the modified Cholesky decomposition, simultaneous estimation of $(B,\Omega)$ is converted to estimation of $q$ univariate-response autoregressions. However, efficient computation is more difficult due to posterior dependence of $B$ and the Cholesky factors of $\Omega$. 
Two Markov chain Monte Carlo (MCMC) algorithms are developed for full posterior inference. The first is a standard component-wise algorithm that can accommodate values of the penalty parameter greater than 2. However, it requires $pq + q(q-1)/2 + 2$ Metropolis-Hastings updates, making it difficult to tune in high dimensions. The second algorithm addresses this using data augmentation, and although unable to allow penalty values greater than 2, it is computationally more efficient, reducing the number of required Metropolis-Hastings updates to 2. 

The computational efficiency of the data augmentation algorithm lies in its use of 
a scale mixture of normals (SMN) representation of the exponential power distribution \citep{west1987} to augment the posterior conditionals of the elements of $B$ and the lower triangular matrix $T$. This allows sampling using multivariate normal distributions, even when $p > n$ and $q > n$, without changing the posterior conditionals of the other parameters. While standard calculations suffice for demonstrating this for $T$, sampling from a multivariate normal distribution for $B$ in the $p > n$ case requires a transformation that uses $T$ and the $p\times p$ orthonormal matrix of the singular value decomposition (SVD) of the covariate matrix $X$. 

When $n$ is large relative to $p$ and $q$, other computational considerations in the development of the data augmentation algorithm give the proposed model distinct advantages over the available implementation of HS-GHS, which also allows full posterior inference. The data augmentation algorithm uses two sampling strategies for the multivariate normal distribution: the approach of \cite{bhattacharyaetal2016} is used when the dimension of the normal distribution to be sampled is greater than the sample size, and the approach of \cite{rue2001} is used otherwise. As a result, computation time is much lower compared to the HS-GHS algorithm when $p \ll n$ (Section \ref{sec:compeffort}). Additionally, when $p < n$, products involving matrices with $n$ rows are pre-computed, which allows the proposed model to be fitted to datasets with large $n$ that HS-GHS is unable to fit due to memory constraints.
The data example in Section \ref{sec:wisp} uses spectra extracted from images obtained by the Hubble Space Telescope. These spectra have length $q=253$ and are divided into two classes based on the presence or absence of emission lines. The training sample for the spectra without emission lines had a sample size of $n=90,670$. In the absence of covariates, the HS-GHS algorithm requires inversion of an $nq\times nq$ matrix to sample the mean structure, and for this data example, 128 gigabytes (GB) of random-access memory (RAM) was insufficient for completing this operation for a single iteration of the algorithm. On the other hand, because $p=1 < n$ and $q < n$, the pre-computation done by the proposed data augmentation algorithm results in the largest matrix required being $q\times q$. Consequently, 16 GB of RAM was sufficient for 5000 iterations of the data augmentation algorithm in this example.

The remainder proceeds as follows. Section \ref{sec:generalizedbridge} develops the GBR prior for univariate-response regression, and Section \ref{sec:gmcb} extends it to estimation of $(B,\Omega)$ to construct the generalized mean-covariance bridge (GMCB) prior. Point estimates and the sampling algorithms are discussed in Section \ref{sec:est}. Estimation performance is compared to other joint mean-covariance methods in Section \ref{sec:sim}, and GMCB is demonstrated on emission spectra in Section \ref{sec:wisp}. Concluding remarks are offered in Section \ref{sec:finalremarks}.

\section{The Generalized Bridge Prior}\label{sec:generalizedbridge}
The modified Cholesky decomposition transforms the estimation of $\Omega$ into a sequence of univariate-response regression problems. A penalized regression approach that allows elimination of the nuisance parameters through the use of marginal densities for inference is developed in this setting, before considering its extension to the multivariate linear model in Section \ref{sec:gmcb}.

Many penalized regression methods can be described by a common framework. Let $\lVert\cdot\rVert_2$ denote the Euclidean norm. If $Y$ is an $n\times 1$ vector of centered responses, $X$ is an $n\times p$ standardized matrix of covariates, and $\beta$ is a $p\times 1$ vector of regression coefficients, for fixed $\alpha>0$ and $\lambda > 0$, the frequentist penalized regression estimate is the solution to
\begin{align}
    \arg\min_\beta\lVert Y-X\beta\rVert_2^2 + \lambda\sum_{j=1}^p|\beta_j|^\alpha. \label{eq:frequentistbridge}
\end{align}
Choices of $\alpha = 1$ and $\alpha = 2$ correspond to the frequentist lasso \citep{tibshirani1996} and ridge \citep{hoerlkennard1970} estimates, respectively. Values of $\alpha$ other than $1$ and $2$ in $(0,2]$ correspond to the frequentist bridge estimate \citep{frankfriedman1993}, though values of $\alpha < 1$ have been limited in application due to non-convexity.

Bayesian penalized regression priors may be constructed so that the frequentist estimates are the posterior modes. If $I_n$ denotes the $n\times n$ identity matrix, a standard Bayesian formulation of penalized regression models assumes
\begin{align}
    Y|X,\beta,\sigma^2 & \sim N(X\beta,\sigma^2I_n), \nonumber \\
    \nu(\beta|\lambda,\alpha,\sigma^2) &= \left(\frac{\alpha \lambda^{1/\alpha}}{2^{1/\alpha+1}(\sigma^2)^{1/\alpha}\Gamma(1/\alpha)}\right)^p\exp\left\lbrace-\frac{\lambda}{2\sigma^2}\sum_{j=1}^p|\beta_j|^\alpha\right\rbrace. \label{eq:standardbayesbridge}
\end{align}
For fixed $(\lambda,\alpha)$ and $\nu(\sigma^2) \propto 1/\sigma^{2}$, the marginal posterior distribution of $\beta$ is characterized by
\begin{align}
    q(\beta|Y)\propto\left[\lVert Y-X\beta\rVert_2^2 + \lambda\sum_{j=1}^p|\beta_j|^\alpha\right]^{-\left(\frac{n}{2}+\frac{p}{\alpha}+1\right)}, \label{eq:standardbridgepost}
\end{align}
and hence the mode of this distribution is the solution to equation \eqref{eq:frequentistbridge}.

Equation \eqref{eq:standardbridgepost} requires a choice of $\lambda$ and $\alpha$. While $\lambda$ is often modeled with a prior or selected through methods such as cross-validation or empirical Bayes approaches, similar treatment of $\alpha$ is not widespread. \cite{polsonetal2014} consider a prior for $\alpha\in(0,1)$, but other approaches fix $\alpha$ at a pre-selected or estimated value \citep{parkcasella2008,mallickyi2018,armagan2009,griffinhoff2020}, despite the optimal choice of $\alpha$ varying based on the nature of the unknown $\beta$ \citep{tibshirani1996,fu1998}. As shown in Figure \ref{fig:samplebetapriors}, smaller values of $\alpha$ accommodate large signals and sparsity, while larger values accommodate small non-zero signals. For estimating an unstructured covariance matrix, it is necessary to accommodate both sparse and dense settings, which is difficult when $\alpha$ is a fixed value. 

\begin{figure}
    \centering
    \includegraphics[width=0.9\textwidth]{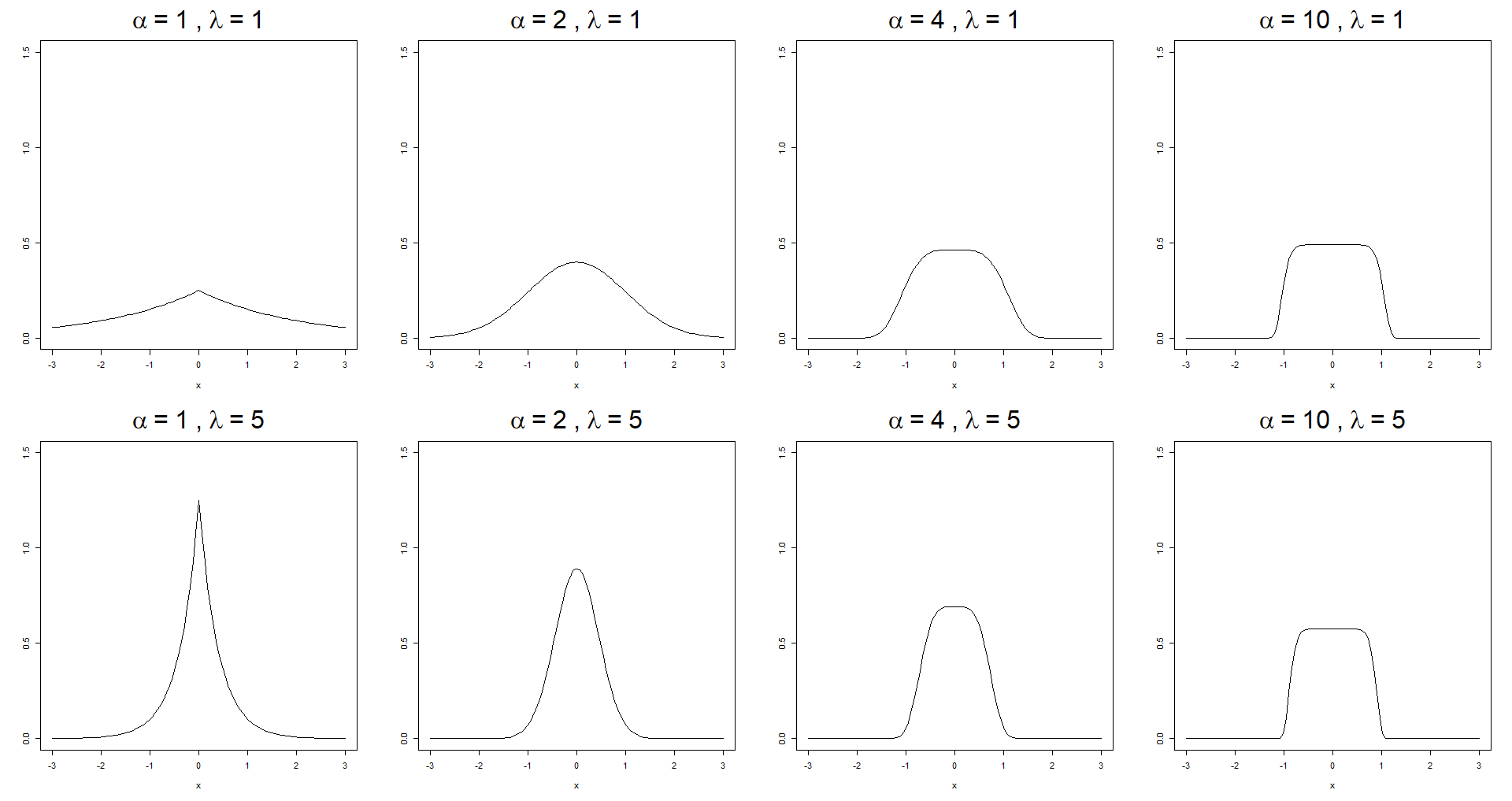}
    \caption{Exponential power prior on $\beta_j$ for different values of $\lambda$ and $\alpha$ when $\sigma^2=1$.}
    \label{fig:samplebetapriors}
\end{figure}

Figure \ref{fig:samplebetapriors} also highlights the role of $\lambda$ in the prior's behavior in sparse settings, where small $\alpha$ is preferred. The value of $\lambda$ must be large enough to shrink noise sufficiently but also small enough to avoid overshrinking large signals. Using a single value of $\lambda$ for all $\beta_j$ leads to sub-optimal performance; \cite{carvalhoetal2010} discuss this trade-off for the case $\alpha=1$. The proposed prior addresses these limitations by using a prior for $(\lambda,\alpha) \in (0,\infty)^p \times [k_1,k_2]$, where $0 < k_1 \le 1$ and $2 \le k_2$.  Allowing $k_1$ to be less than 1 will encourage sparsity when appropriate, while allowing $k_2$ to be larger than 2 will yield improved performance in dense settings. Replacing the scalar $\lambda$ with a $p$-dimensional vector allows for differing shrinkage in estimating each $\beta_j$.

As in the standard Bayesian formulation of penalized regression, the proposed prior assumes 
$$Y|X,\beta,\sigma^2 \sim N(X\beta,\sigma^2I_n).$$
A proper conjugate prior $\sigma^2 \sim \text{IG}(a,b)$ is assumed, and the prior on $\beta$ is
$$\nu(\beta|\lambda,\alpha,\sigma^2) = \left(\frac{\alpha}{2^{1/\alpha+1}(\sigma^2)^{1/\alpha}\Gamma(1/\alpha)}\right)^p\left(\prod_{j=1}^p\lambda_j\right)^{1/\alpha}\exp\left\lbrace-\frac{1}{2\sigma^2}\sum_{j=1}^p\lambda_j|\beta_j|^\alpha\right\rbrace.$$
The only difference from equation \eqref{eq:standardbayesbridge} is that each $\beta_j$ is assigned a parameter $\lambda_j > 0$. 
Observe that this corresponds to an exponential power prior on $\beta_j$ with $\mathbb{E}(\beta_j|\lambda_j,\sigma^2,\alpha) = 0$ and 
$$\var(\beta_j|\lambda_j,\sigma^2,\alpha)=\frac{\Gamma(3/\alpha)}{\Gamma(1/\alpha)}\left(\frac{\lambda_j}{\sigma^2}\right)^{-2/\alpha}4^{1/\alpha}.$$
Hence the variance is a decreasing function of $\lambda_j$. If $\lambda_j$ is small, larger values of $\beta_j$ are likely but if $\lambda_j$ is large, smaller values of $\beta_j$ are likely. Thus the prior on $\lambda_j$ should place a reasonable amount of mass close to zero while also allowing larger values. Since $\lambda_j$ is positive, this can be achieved with the following mixture of Gamma distributions,  
$$\lambda_j \sim \frac{1}{2}\text{Gamma}(e_{1},f_{1})+\frac{1}{2}\text{Gamma}(e_{2},f_{2}),$$
where the hyperparameters are chosen so that one component of the mixture concentrates its mass near zero, while the other is flatter with mass concentrated away from zero to accommodate large values. A simple approach to select such a prior is to choose $e_1$ and $f_1$ such that the first component has a small mean and variance and $e_2$ and $f_2$ such that the second component has a relatively large mean and variance.

Finally, a prior for $\alpha$ needs to be specified. Notice that unlike $\lambda_j$, which controls shrinkage for an individual $\beta_j$, this parameter is common to all of the $\beta_j$. If one wants to maintain the analogy with the frequentist methods in equation \eqref{eq:frequentistbridge}, the prior for $\alpha$ can be specified as a mixture of three components, where each represents the analyst's assessment of the relative importance of lasso, bridge, and ridge. Empirical work for such a mixture prior indicated that different choices for the mixture parameters yield similar estimation and prediction \citep{xiang2020}. This motivated consideration of a uniform distribution for $\alpha$ which has been found to work well, especially since extending the range of $\alpha$ appears to be impactful. Therefore it is assumed that
$$\alpha \sim \text{Unif}(k_1,k_2), \quad 0 < k_1 \le 1,\ 2 \le k_2.$$
The prior obtained for $\beta$ by marginalizing over $\lambda$, $\alpha$, and $\sigma^2$ is referred to as the generalized bridge (GBR) prior. 

\subsection{Effects of Hyperparameters}

Consider the effect of the hyperparameters on the GBR prior's density for a single regression coefficient when $\sigma^2=1$. Figure \ref{fig:gbr_k2} shows that larger values of $k_2$ increase both the mass assigned to the neighborhood around zero and the size of that neighborhood.
The value of $k_1$ has a minimal effect on these aspects of the density and instead determines the value at zero, with smaller values resulting in a taller spike (Table \ref{table:gbr_zero}). Thus larger values of $k_2$ are more suitable for dense settings, while small values of $k_1$ encourage sparsity.

\begin{table}[!ht]
    \centering
    \caption{Value of GBR density at zero for different values of $k_1$ when $k_2=2$ and $(e_1,f_1,e_2,f_2) = (0.1, 1, 2, 0.01)$.}
    \begin{tabular}{cc}
    \hline\rule{0pt}{2.5ex}  
    $k_1$ & $\nu(0)$ \\ \hline\rule{0pt}{3ex}
    0.01 & $5.105 \times 10^{166}$ \\ 
    0.1 & $7.068 \times 10^{14}$ \\ 
    0.5 & 151.337 \\ 
    1 & 7.851 \\ \hline
    \end{tabular}
    \label{table:gbr_zero}
\end{table}

\begin{figure}
    \centering
    \includegraphics[width=0.5\textwidth]{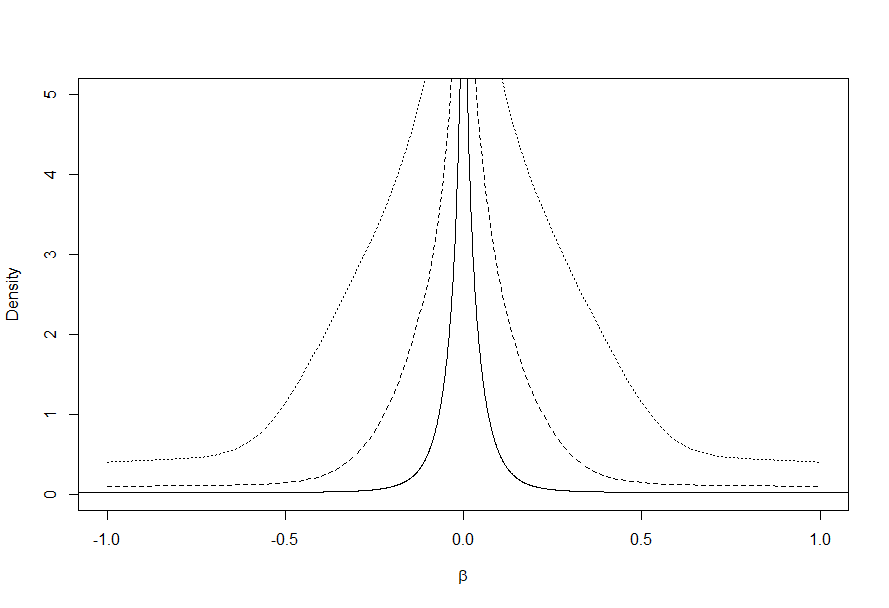}
    \caption{GBR density for $k_2=2$ (solid), $k_2=4$ (dashed), and $k_2=8$ (dotted) when $k_1=1$ and $(e_1,f_1,e_2,f_2) = (0.1, 1, 2, 0.01)$.}
    \label{fig:gbr_k2}
\end{figure}

Compared to $k_1$ and $k_2$, the values of $e_1$, $f_1$, $e_2$, and $f_2$ have more moderate effects on the density. Recall that these hyperparameters are specified so that the mean and variance of the first component of the prior on $\lambda_j$ are small and the mean and variance of the second component are large. Larger values of $e_1$ and smaller values of $f_1$ result in more mass near zero. Smaller values of $f_2$ result in more mass concentrated at zero. The effects of $e_2$ are most prominent, with values of $e_2\le 1$ resulting in flatter densities. Figure \ref{fig:gbr_densevssparse} compares the GBR density with $(e_1,f_1,e_2,f_2) = (0.1, 1, 2, 0.01)$ to the density with  $(e_1,f_1,e_2,f_2) = (1, 1, 40, 0.5)$ when $(k_1,k_2)=(0.5,4)$. The first prior has heavier tails and more mass concentrated at zero, while the second prior has more positive density on smaller, non-zero values. 
Figure \ref{fig:gbr_densevssparse_usual} compares the same two choices for $(e_1,f_1,e_2,f_2)$ with the usual range of $\alpha$, where $(k_1,k_2)=(1,2)$. The priors in Figure \ref{fig:gbr_densevssparse} have more mass concentrated at zero and small non-zero values and heavier tails, allowing better accommodation of sparse and dense settings. 

\begin{figure}
    \centering
    \begin{subfigure}[b]{0.45\textwidth}
         \centering
         \includegraphics[width=\textwidth]{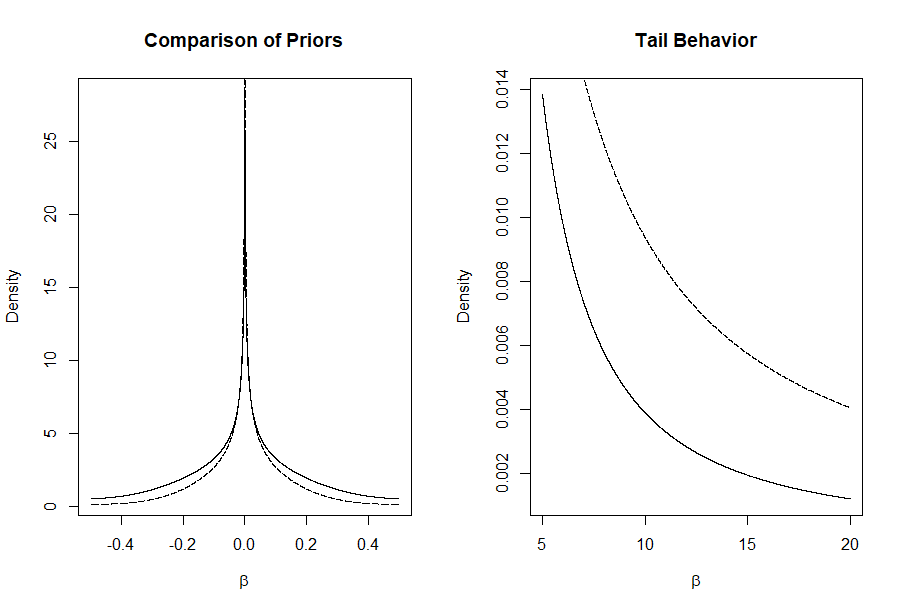}
         \caption{With $(k_1, k_2) = (0.5, 4)$.}
         \label{fig:gbr_densevssparse}
     \end{subfigure}
     \hfill
    \begin{subfigure}[b]{0.45\textwidth}
         \centering
         \includegraphics[width=\textwidth]{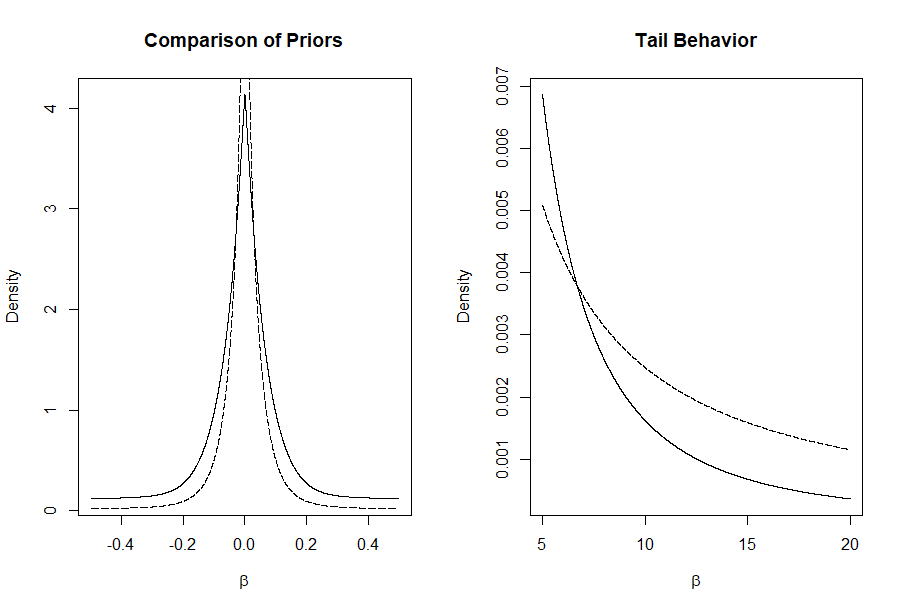}
         \caption{With the usual range $(k_1, k_2) = (1, 2)$.}
         \label{fig:gbr_densevssparse_usual}
    \end{subfigure}
    \hfill
    \caption{Comparison of the GBR densities with $(e_1,f_1,e_2,f_2) = (0.1, 1, 2, 0.01)$ (dashed) and $(e_1,f_1,e_2,f_2) = (1, 1, 40, 0.5)$ (solid). Code for reproducing these plots is available at the GitHub repository for the \texttt{R} package \texttt{GMCB}.}
\end{figure}

\subsection{Tail-Robustness}

The global-local structure of the GBR prior leads to the same sort of tail-robustness properties enjoyed by the horseshoe prior \citep{carvalhoetal2010}. Consider the following one-dimensional case of the model:
\begin{align*}
    Y|\beta & \sim N(\beta,1), \\
    \nu(\beta|\lambda,\alpha) &= \left(\frac{\alpha\lambda^{1/\alpha}}{2^{1/\alpha + 1}\Gamma(1/\alpha)}\right)\exp\left\lbrace-\frac{\lambda}{2}|\beta|^\alpha\right\rbrace, \\
    \lambda &\sim \frac{1}{2}\text{Gamma}(e_{1},f_{1})+\frac{1}{2}\text{Gamma}(e_{2},f_{2}), \\
    \alpha &\sim \text{Unif}(k_1,k_2).
\end{align*}
Let $m(y)$ be the marginal density achieved by integrating over all the parameters. A standard calculation shows that the marginal posterior mean of $\beta$ satisfies 
$$\mathbb{E}(\beta|y) = y + \frac{d}{dy}\log m(y),$$
and hence the following result shows that the GBR prior satisfies a tail-robustness property, indicating that bias is small for large signals.

\begin{theorem}
There is some $C_h$ which depends on the hyperparameters such that $|y-\mathbb{E}(\beta|y)| \le C_h$ and
$$\lim_{|y|\rightarrow\infty}\frac{d}{dy}\log m(y) = 0.$$
\end{theorem}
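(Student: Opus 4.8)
The plan is to exploit the identity $\mathbb{E}(\beta\mid y)=y+\frac{d}{dy}\log m(y)$ recorded above, so that both assertions concern the single function $g(y):=\frac{d}{dy}\log m(y)=\mathbb{E}(\beta\mid y)-y$. Write $m(y)=\int\phi(y-\beta)\,p(\beta)\,d\beta$, where $\phi$ is the standard normal density and $p$ is the marginal prior on $\beta$ obtained from $\nu(\beta\mid\lambda,\alpha)$ after integrating out $\lambda$ and $\alpha$. Since each $\nu(\cdot\mid\lambda,\alpha)$ is a proper density, so is $p$, hence $m$ is finite, strictly positive, and (by differentiation under the integral) smooth; thus $g$ is continuous on $\mathbb{R}$. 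Consequently the bound $|y-\mathbb{E}(\beta\mid y)|=|g(y)|\le C_h$ will follow for free once $g$ is shown to vanish at $\pm\infty$, by taking $C_h:=\sup_{y}|g(y)|<\infty$. Thus the whole problem reduces to proving $\lim_{|y|\to\infty}g(y)=0$, and after the substitution $u=\beta-y$ this means showing
\[
g(y)=\frac{\int u\,\phi(u)\,p(y+u)\,du}{\int \phi(u)\,p(y+u)\,du}\longrightarrow 0 .
\]

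The engine of the argument is a tail lemma for $p$. Integrating $\lambda$ out of $\nu(\beta\mid\lambda,\alpha)$ against the two–component Gamma mixture is an elementary Gamma integral and yields the closed form $p(\beta\mid\alpha)=c(\alpha)\sum_{i=1}^{2}d_i(\alpha)\bigl(f_i+|\beta|^{\alpha}/2\bigr)^{-(1/\alpha+e_i)}$ with $c(\alpha),d_i(\alpha)>0$ explicit, so that $p(\beta)=(k_2-k_1)^{-1}\int_{k_1}^{k_2}p(\beta\mid\alpha)\,d\alpha$. I would then establish three facts, all uniform consequences of the \emph{compactness} of $[k_1,k_2]$ (in particular the values $\alpha>2$ cause no trouble, contributing only lighter tails): for $y>0$ large, (i) $p(y+u)/p(y)\le C$ for all $|u|\le y/2$, with $C$ depending only on the hyperparameters (a Potter–type bound), (ii) $p(y+u)/p(y)\to 1$ as $y\to\infty$ for each fixed $u$ (long–tailedness), and (iii) $p(y)\ge c_0\,|y|^{-M}$ for some $c_0,M>0$. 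Each of (i)--(iii) reduces to controlling the ratio $(f_i+y^{\alpha}/2)/(f_i+(y+u)^{\alpha}/2)$, which in turn comes down to the elementary estimate that $|1+u/y|^{\alpha}$ is bounded, and tends to $1$, uniformly over $\alpha\in[k_1,k_2]$; the symmetry $p(-\beta)=p(\beta)$ then gives the analogues as $y\to-\infty$.

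With the lemma in hand I would finish by splitting the numerator and denominator integrals in the display at $|u|=|y|/2$. On $|u|>|y|/2$ the monotonicity of $\phi$ gives $\phi(u)\le\phi(|y|/2)=O(e^{-y^2/8})$, and combined with $|u|\phi(u)\le C'e^{-u^2/3}$ and $\int p(y+u)\,du=1$, both tail pieces are $O(e^{-y^2/12})$; dividing by the polynomial lower bound $p(y)\ge c_0|y|^{-M}$ from (iii), these contributions vanish. On $|u|\le|y|/2$, writing the integrands as $u\phi(u)\{p(y+u)/p(y)\}$ and $\phi(u)\{p(y+u)/p(y)\}$, fact (i) dominates them by the integrable $C|u|\phi(u)$ and $C\phi(u)$, and fact (ii) gives pointwise convergence to $u\phi(u)$ and $\phi(u)$; dominated convergence then sends the numerator (divided by $p(y)$) to $\int u\phi(u)\,du=0$ and the denominator (divided by $p(y)$) to $\int\phi(u)\,du=1$. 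Hence $g(y)\to 0$, and then $C_h:=\sup_y|g(y)|$ is finite as explained above.

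I expect the tail lemma to be the crux. Two points need care: obtaining the bound in (i) \emph{uniformly} over the whole slab $|u|\le|y|/2$ rather than merely for fixed $u$ (this is what legitimizes dominated convergence), which requires tracking how $c(\alpha)$, $d_i(\alpha)$ and the exponents $1/\alpha+e_i$ behave as $\alpha$ ranges over $[k_1,k_2]$; and verifying that the spike of $p$ near the origin — encountered in the convolution when $u\approx-y$ — is harmless, which it is precisely because it is multiplied there by $\phi(|y|/2)$ and because $p$ integrates to one.
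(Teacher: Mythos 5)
Your argument is correct, and it reaches the conclusion by a genuinely different route from the paper. The paper works with the double integral over $(\beta,\alpha)$ directly: after assuming $e_1=e_2$ and $f_1=f_2$ for tractability, it symmetrizes via $t\mapsto -t$ so that the numerator becomes an integral of $t\exp(-t^2/2)$ against the \emph{antisymmetric difference} $(|t-y|^\alpha/2+f)^{-e-1/\alpha}-(|t+y|^\alpha/2+f)^{-e-1/\alpha}$, bounds that difference on $0<t<y/2$ by a linear function $V(t)$ through a derivative comparison $S'(t)<V'(t)$, pairs this with an explicit lower bound on the denominator, and concludes that $-m'(y)/m(y)=O(1/y)$. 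You instead treat $m$ as the convolution $\phi * p$ with the marginal prior $p$, prove a tail lemma (uniform Potter-type bound, long-tailedness, and a polynomial lower bound, all uniform over $\alpha\in[k_1,k_2]$ by compactness), and finish with dominated convergence after splitting at $|u|=|y|/2$. The trade-offs: your route handles general $(e_1,f_1)\ne(e_2,f_2)$ without the paper's simplification, is the standard "Gaussian convolved with a long-tailed density" argument, and explicitly disposes of the boundedness claim $|y-\mathbb{E}(\beta\mid y)|\le C_h$ via continuity of $m'/m$ plus vanishing at $\pm\infty$ (the paper leaves that step implicit); the paper's more hands-on computation buys a quantitative decay rate, $\tfrac{d}{dy}\log m(y)=O(1/|y|)$, which your dominated-convergence step does not deliver but which the theorem statement does not require. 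Your identified crux --- that the Potter bound must hold uniformly over the whole slab $|u|\le|y|/2$ --- is handled exactly as you suggest, since $\bigl(f_i+y^\alpha/2\bigr)/\bigl(f_i+|y+u|^\alpha/2\bigr)\le 2^\alpha\le 2^{k_2}$ there and the exponents $e_i+1/\alpha$ are bounded by $e_i+1/k_1$, with the $\alpha$-dependent prefactors cancelling term by term in the ratio of mixtures.
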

\begin{proof}
See Appendix \ref{app:tailrobustnessproof}.
\end{proof}

In estimation and prediction, the GBR prior is competitive with popular methods such as the horseshoe \citep{carvalhoetal2010} and the spike-and-slab lasso \citep{rockovageorge2018}. Simulation results and additional details for the univariate-response regression case, including sampling approaches and posterior consistency results, are available \citep{xiang2020}.

\section{The Generalized Mean-Covariance Bridge Prior}\label{sec:gmcb}
Consider now an extension of the GBR prior to mean and covariance estimation, where the response is modeled by the multivariate linear model in equation \eqref{eq:multimeancov}. The regression interpretation of the modified Cholesky decomposition is first reviewed \citep{pourahmadi1999} and used to re-express the Gaussian likelihood as a sequence of autoregressions. The following notation will be used. Let $j\mathbin{:}k$ denote the indices $j$ through $k$, and let $A^j$ denote column $j$ of matrix $A$. 

Define the matrix $T$ as in equation \eqref{eq:modcholprec}, and suppressing the dependence on $i$, equation \eqref{eq:multimeancov} can be re-expressed as $\varepsilon = T(Y - B^\top X)$. Because $T$ is lower triangular, the $j$th row equality is given by
$$
Y_j = (B^j)^\top X - T_{j,1} (Y_1 -(B^1)^\top X) - \cdots - T_{j, j-1} (Y_{j-1}-(B^{j-1})^\top X) + \varepsilon_j,
\quad j\in\{2,\ldots, q\},
$$
which is the linear regression of $Y_j$ on $Y_1,\ldots, Y_{j-1}$. Thus estimation of $(B,\Omega)$ can be accomplished by estimating a sequence of univariate-response regressions, where the $j$th regression has coefficients $B^j$ and $-T_{j,1:(j-1)}=\delta_j^\top$. 
Assuming $\varepsilon$ is multivariate normal with mean zero and diagonal covariance matrix $D$ with $D_{jj}=\gamma_j$, the likelihood of $Y$ can be expressed as 
\begin{align}
Y_1|X,B^1,\gamma_1 &\sim N\left((B^1)^\top X,\gamma_1\right), \nonumber \\
Y_j|Y_{1:(j-1)},X,B^{1:j},\delta_j,\gamma_j &\sim N\left((B^j)^\top X+\sum_{k=1}^{j-1}\delta_{j,k}\left(Y_k-(B^k)^\top X \right),\gamma_j\right),\quad j=2,\ldots,q. \label{eq:modelasreg}
\end{align}
The GBR prior in Section \ref{sec:generalizedbridge} can be directly specified on $B^j$ and $\delta_j$. 
Note that there is little reason to expect the same penalty to be appropriate for $B^j$ and $\delta_j$, as the matrices $B$ and $T$ are not expected to have similar levels of sparsity. Thus the GBR prior is specified such that elements of $B$ share a penalty parameter while elements of $T$ share a separate penalty parameter.
In particular, the prior on $B^j$ is
\begin{equation*}
    \nu(B^j|\Lambda^j,\alpha_b,\gamma_j) \propto \exp\left\lbrace-\frac{1}{2\gamma_j}\sum_{k=1}^p\lambda_{kj}|B_{kj}|^{\alpha_b}\right\rbrace,\quad j = 1,\ldots, q,
\end{equation*}
and the prior on $\delta_j$ is
\begin{equation*}
    \nu(\delta_j|\tau_j,\alpha_d,\gamma_j) \propto \exp\left\lbrace-\frac{1}{2\gamma_j}\sum_{k=1}^{j-1}\tau_{j,k}|\delta_{j,k}|^{\alpha_d}\right\rbrace, \quad j=2,\ldots,q.
\end{equation*}
For the remaining parameters, each of the regularization parameters $\lambda_{kj}$ and $\tau_{j,k}$ follow two-component Gamma mixture priors, the penalty parameters $\alpha_b$ and $\alpha_d$ are independent and identically distributed (i.i.d.) Unif$(k_1,k_2)$, and the $\gamma_j$ are i.i.d. IG$(a,b)$. The full hierarchical model is detailed in Appendix \ref{app:samplingalgorithms}, with the resulting posterior density characterized by equation \eqref{eq:posterior}.

The above prior is specified on $(B,T,D)$. The prior induced on $(B,\Omega)$ by this specification, marginalizing over the regularization parameters and the penalty parameters, is referred to as the generalized mean-covariance bridge (GMCB) prior. 
Similar to the GBR prior, when the regularization parameters are fixed and $\alpha_b$ and $\alpha_d$ are fixed to be equal, the posterior mode is the frequentist bridge estimate for $B^j$ and $\delta^j$.

\section{Estimation and Sampling}\label{sec:est}
For posterior inference on $(B,\Omega)$ under the GMCB prior, equation \eqref{eq:modcholprec} is used to obtain posterior samples of $\Omega$ from samples of $T$ and $D$. Point estimators are commonly selected to be the Bayes estimators under separate loss functions for $B$ and $\Omega$. A common choice of loss function is squared Frobenius loss for both $B$ and $\Omega$, resulting in the point estimator
\begin{align*}
    (\hat{B}_F,\hat{\Omega}_F) &=\Big(\mathbb{E}(B|Y,X),\ \mathbb{E}(\Omega|Y,X)\Big). 
\end{align*}
The following loss functions are also considered: 
\begin{align*}
    L_Q(\tilde{B},B)&=\tr\Big((\tilde{B}-B)\Omega(\tilde{B}-B)^\top\Big),\\
    L_S(\tilde{\Omega}, \Omega)&=\tr(\tilde{\Omega}\Omega^{-1})-\log|\tilde{\Omega}\Omega^{-1}|-q.
\end{align*}
The loss function $L_Q$ is the scalar quadratic loss \citep{yuasakubokawa2021} and is based on the Kullback-Liebler (KL) divergence between two matrix normal distributions with the same precision matrix, and $L_S$ is based on the KL divergence between two multivariate normal distributions with the same mean. 
The Bayes estimators under these loss functions \citep{yangberger1994,yuasakubokawa2021} are
\begin{align*}
    (\hat{B}_Q,\hat{\Omega}_S) &=\bigg(\mathbb{E}(B\Omega|Y,X)\Big[\mathbb{E}(\Omega|Y,X)\Big]^{-1},\  \Big[\mathbb{E}(\Omega^{-1}|Y,X)\Big]^{-1}\bigg). 
\end{align*}

Closed-form expressions under the GMCB prior are not available for either of these point estimators, requiring Monte Carlo methods for estimation.

\subsection{Markov Chain Monte Carlo}\label{sec:mcmc}
Two component-wise MCMC algorithms with invariant density characterized by equation \eqref{eq:posterior} in Appendix \ref{app:samplingalgorithms} are developed. The full details of the posterior conditionals are provided in Appendix \ref{app:postcond_mh}. Among these, only the  distributions for the $\lambda_{kj}$, $\tau_{j,k}$, and $\gamma_j$ are standard distributions. The GMCB-MH algorithm is a component-wise sampling scheme that uses Gibbs updates for these parameters and random walk Metropolis-Hastings updates with Gaussian proposal distributions for the remaining parameters. All updates are univariate, as block updates for $B^j$ and $\delta_j$ result in low acceptances rates, even for $p$ and $q$ small relative to $n$. For ease of notation, define $\gamma$ to be the vector of $\gamma_j$'s, and $\delta$ and $\tau$ to be vectors of length $q(q-1)/2$ concatenating the $\delta_j$ and $\tau_j$, respectively. The algorithm makes updates in the following order:
\begin{align*}
&\qquad(\Lambda,B,\alpha_b,\tau,\delta,\gamma,\alpha_d)\rightarrow\ (\Lambda',B,\alpha_b,\tau,\delta,\gamma,\alpha_d)\rightarrow\ (\Lambda',B',\alpha_b,\tau,\delta,\gamma,\alpha_d)\\
&\rightarrow\ (\Lambda',B',\alpha_b',\tau,\delta,\gamma,\alpha_d)\rightarrow\ (\Lambda',B',\alpha_b',\tau',\delta,\gamma,\alpha_d)\rightarrow\ (\Lambda',B',\alpha_b',\tau',\delta',\gamma,\alpha_d)\\
&\rightarrow\ (\Lambda',B',\alpha_b',\tau',\delta',\gamma',\alpha_d)\rightarrow\ (\Lambda',B',\alpha_b',\tau',\delta',\gamma',\alpha_d').
\end{align*}
This update order implies that initialization values for the regularization parameters $\Lambda$ and $\tau$ are not required. 

GMCB-MH is valid for all $n$, $p$, $q$, and $k_2\ge 2$. However, a more computationally efficient algorithm is available when $k_2=2$. In this case, the exponential power distribution has a SMN representation \citep{west1987}. This property can be leveraged to replace the Metropolis-Hastings updates for $B$ and $\delta$ in GMCB-MH with Gibbs updates. Although the SMN representation has been used for a data augmentation algorithm for the Bayesian bridge in multiple regression \citep{polsonetal2014}, the available implementation cannot be directly applied here because $B$ and $\delta$ are not conditionally independent.  

\subsection{GMCB-SMN Algorithm}

The GMCB-SMN algorithm updates $B$ and $\delta$ by augmenting their posterior conditional distributions. The basic strategy for $B$ is discussed here, with the full details and application of the strategy to sampling $\delta$ deferred to Appendix \ref{app:postcond_smn}. Both GMCB-SMN and GMCB-MH are implemented using the C++ interface provided in the \texttt{Rcpp} \citep{rcpp} and \texttt{RcppArmadillo} \citep{rcpparmadillo} packages in the \texttt{R} package \texttt{GMCB}, which is freely available at \url{https://github.com/czhao15103/GMCB}.

Let $p_{a}$ denote the density of a positive stable random variable with characteristic exponent $a < 1$. \cite{west1987} showed that the mixing density in the SMN representation of the exponential power distribution is the density of a polynomially-tilted positive stable random variable.  Then the SMN representation of the prior on $B_{kj}$ is given by 
\begin{align*}
    B_{kj}|\omega_{kj},\lambda_{kj},\gamma_j,\alpha_b & \sim N\left(0,\frac{1}{\omega_{kj}}\bigg(\frac{2\gamma_j}{\lambda_{kj}}\bigg)^{2/\alpha_b}\right),\\
    g(\omega_{kj}|\alpha_b)&\propto\omega_{kj}^{-1/2}\ p_{\alpha_b/2}(\omega_{kj}), \quad \omega_{kj} > 0,
\end{align*}
and the posterior conditional distribution of $B$ can be obtained by marginalizing over the $\omega_{kj}$:
\begin{align}
    q(B &|Y,X,\Lambda,\alpha_b,\delta,\gamma,\tau,\alpha_d) \nonumber \\
    &\propto \int_{\mathbb{R}_+^{pq}} \exp\left\lbrace-\frac{1}{2}\tr((Y-XB)\Omega(Y-XB)^\top)\right\rbrace \nonumber \\
    &\qquad\quad \cdot \left[\prod_{k=1}^p\prod_{j=1}^q \nu(B_{kj}|\omega_{kj},\lambda_{kj},\gamma_j,\alpha_b) g(\omega_{kj}|\alpha_b) \right]\,d\omega.\label{eq:baugment}
\end{align}
The integrand of equation \eqref{eq:baugment} will be referred to as the augmented posterior conditional distribution of $B$. To update $B$, the algorithm samples the conditional distributions associated with equation \eqref{eq:baugment} for $\omega$ and then $B$ and discards $\omega$. Doing so will preserve the invariant density of the Markov chain and does not affect the other posterior conditional distributions.

The conditional distribution of $\omega_{kj}$ associated with equation \eqref{eq:baugment} is the distribution of an exponentially-tilted positive stable random variable. \cite{devroye2009} proposed a double-rejection algorithm for sampling from this distribution that has been implemented in the \texttt{R} package \texttt{copula} \citep{copula}. A modified version of this implementation is used in the package \texttt{GMCB}.

For deriving the conditional distribution of $B$ associated with equation \eqref{eq:baugment}, note that the contribution from the SMN representation of the exponential power prior is easily expressed as a multivariate normal distribution on $\vect(B)$ with mean zero and a $pq \times pq$ diagonal covariance matrix $\Delta$. Thus the derivation is straightforward when the likelihood can be expressed in terms of $\vect(B)$. When $X^\top X$ is invertible, this can be accomplished by using the equivalence between a matrix normal distribution and a multivariate normal distribution. Routine calculation then shows that the conditional distribution under equation \eqref{eq:baugment} is a multivariate normal. The expressions for the mean and covariance matrix can be found in Appendix \ref{app:postcond_smn}.

When $p \ge n$, the likelihood cannot be rewritten using the equivalence between a matrix normal distribution and a multivariate normal distribution, as it relies on the invertibility of $X^\top X$. However, a variable transformation for $B$ using the SVD of $X$ and the modified Cholesky decomposition of $\Omega$ makes it possible to sample with a multivariate normal distribution. Define $U\in\mathbb{R}^{n\times n}$ and $V\in\mathbb{R}^{p\times p}$ to be orthonormal matrices and $C\in\mathbb{R}^{n\times p}$ such that the SVD of $X$ is given by $X=UCV^\top$. 
Define $\eta=V^\top B T^\top$. The Jacobian of this transformation is a constant with respect to $\eta$, and letting $\otimes$ denote the Kronecker product, the trace in the likelihood component can be rewritten in terms of $\eta$ as 
\begin{align*}
    -2\vect(C^\top U^\top YT^\top D^{-1})^\top \vect(\eta) + \vect(\eta)^\top (D^{-1/2}\otimes C)^\top (D^{-1/2}\otimes C)\vect(\eta).
\end{align*}
Although the matrix $(D^{-1/2}\otimes C)^\top (D^{-1/2}\otimes C)$ is not full rank, the trace in the prior after transformation is
$$ -\frac{1}{2}\vect(\eta)^\top (T^{-1} \otimes V)^\top  \Delta (T^{-1} \otimes V)\vect(\eta).$$
As $T$ is unit lower triangular and $V$ is orthonormal, $(T^{-1} \otimes V)^\top  \Delta (T^{-1} \otimes V)$ is positive definite, so that the conditional distribution of $\vect(\eta)$ is multivariate normal.

In either case, updating $B$ requires sampling from a $pq$-variate normal distribution. For sampling this distribution, the computational complexity of the approach in \cite{rue2001} is $O(p^3q^3)$; that is, there exists a constant $M>0$ such that the number of floating point operations required is bounded above by $Mp^3q^3$. Specific to GMCB-SMN, the computational complexity of the approach in \cite{bhattacharyaetal2016} is $O(np^2q^3)$, as the matrix $(T^{-1} \otimes V)^\top \Delta (T^{-1} \otimes V)$ is not sparse. To sample as efficiently as possible, GMCB-SMN uses the approach in \cite{rue2001} when $p < n$ and the approach in \cite{bhattacharyaetal2016} when $p\ge n$. As shown in Section \ref{sec:compeffort}, when $p \ll n$, this choice makes GMCB-SMN much faster than the HS-GHS algorithm, which always uses the approach in \cite{bhattacharyaetal2016} when updating $B$.

A similar augmentation strategy is used for the posterior conditional distribution of $\delta_j$. The SMN representation expresses the prior on $\delta_{j,k}$ as
\begin{align*}
    \delta_{j,k}|\epsilon_{j,k},\tau_{j,k},\gamma_j,\alpha_d&\sim N\left(0,\frac{1}{\epsilon_{j,k}}\bigg(\frac{2\gamma_j}{\tau_{j,k}}\bigg)^{2/\alpha_d}\right),\\
    g(\epsilon_{j,k}|\alpha_d)&\propto\epsilon_{j,k}^{-1/2}\ p_{\alpha_d/2}(\epsilon_{j,k}), \quad \epsilon_{j,k} > 0.
\end{align*}
Like the $\omega_{kj}$, the conditional distribution of each $\epsilon_{j,k}$ associated with the augmented posterior conditional distribution is an exponentially-tilted positive stable random variable. For the conditional distribution of $\delta_j$ associated with the augmented posterior conditional, standard calculations show that it is a $(j-1)$-variate normal distribution for all values of $n$, $p$, and $q$. When updating $\delta_j$, GMCB-SMN uses the approach in \cite{rue2001} when $j \le n$ and the approach in \cite{bhattacharyaetal2016} when $j > n$.

\subsubsection{Summarizing the GMCB-SMN Algorithm}

By augmenting the posterior conditional distributions of $B$ and $\delta$, GMCB-SMN only requires Metropolis-Hastings updates for the penalty parameters $\alpha_b$ and $\alpha_d$. Let $\omega$ and $\epsilon$ denote the collections of the latent scale variables for $B$ and $\delta$, respectively. The algorithm makes updates in the following order:
\begin{align*}
&\qquad(B,\Lambda,\alpha_b,\delta,\tau,\gamma,\alpha_d)\rightarrow\ (\omega',B,\Lambda,\alpha_b,\delta,\tau,\gamma,\alpha_d) \\
&\rightarrow\ (\omega',B',\Lambda,\alpha_b,\delta,\tau,\gamma,\alpha_d) \rightarrow\ (\omega',B',\Lambda',\alpha_b,\delta,\tau,\gamma,\alpha_d)\\
&\rightarrow\ (\omega',B',\Lambda',\alpha_b',\delta,\tau,\gamma,\alpha_d) \rightarrow\ (\omega',B',\Lambda',\alpha_b',\epsilon',\delta,\tau,\gamma,\alpha_d) \\
& \rightarrow\ (\omega',B',\Lambda',\alpha_b',\epsilon',\delta',\tau,\gamma,\alpha_d) \rightarrow(\omega',B',\Lambda',\alpha_b',\epsilon',\delta',\tau',\gamma,\alpha_d) \\
&\rightarrow\ (\omega',B',\Lambda',\alpha_b',\epsilon',\delta',\tau',\gamma',\alpha_d) \rightarrow\ 
(\omega',B',\Lambda',\alpha_b',\epsilon',\delta',\tau',\gamma',\alpha_d').
\end{align*}
The latent scale variables $\omega'$ and $\epsilon'$ are discarded, and the posterior conditional distributions used to update $\Lambda$, $\alpha_b$, $\tau$, $\gamma$, and $\alpha_d$ are the same as those used in GMCB-MH.

\subsection{Comparing GMCB-MH and GMCB-SMN}

Although GMCB-SMN is not applicable for values $k_2 > 2$, it is much more efficient than GMCB-MH. For the low-dimensional scenarios described in Section \ref{sec:sim}, sampler efficiency was compared based on the computation time and the multivariate effective sample size (ESS) \citep{vatsetal2019}, which was computed using the \texttt{R} package \texttt{mcmcse} \citep{mcmcse}, for 1e5 iterations. As shown in Table \ref{table:esscomparison} in Section \ref{sec:sim}, GMCB-MH has a much lower multivariate ESS. This is likely due to higher autocorrelation for $B$ and $\delta$, as illustrated in Figure \ref{fig:acfcomparison}. Autocorrelation for the remaining parameters tended to be similar between the two algorithms, which is expected as the sampling approach is the same for those parameters. While the difference in computation time shown in Figure \ref{fig:gmcb_time_comparison} is not that large, changes in the relative size of $p$ and $q$ can have a significant impact, which will be discussed further in Sections \ref{sec:sim} and \ref{sec:wisp}.

\begin{figure}
    \centering
    \begin{subfigure}[b]{0.45\textwidth}
         \centering
         \includegraphics[width=\textwidth]{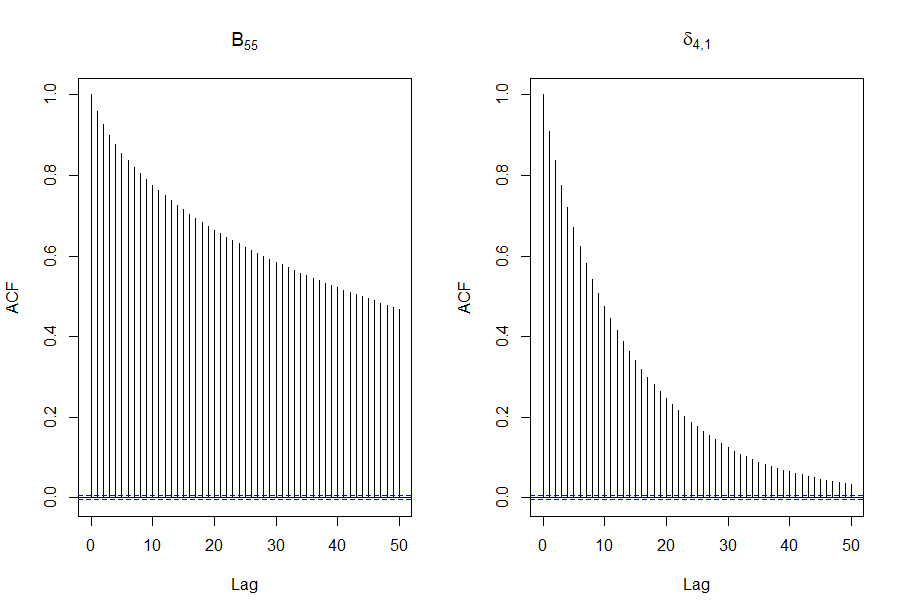}
         \caption{GMCB-MH.}
     \end{subfigure}
     \hfill
    \begin{subfigure}[b]{0.45\textwidth}
         \centering
         \includegraphics[width=\textwidth]{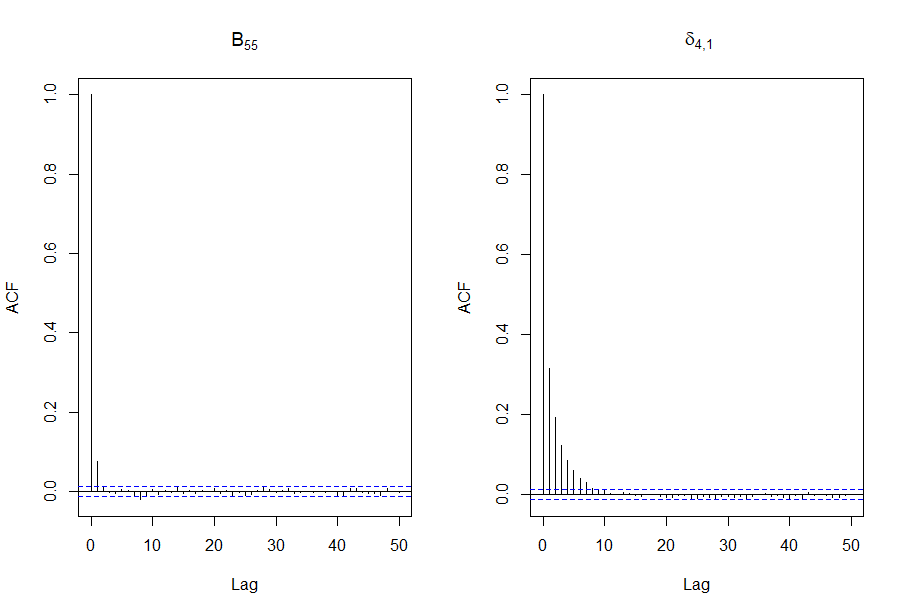}
         \caption{GMCB-SMN.}
     \end{subfigure}
    \caption{Autocorrelation plots for randomly selected elements of $B$ and $\delta$ in Scenario 1 in Section \ref{sec:sim}. These plots are typical of what was observed in each scenario. Additional correlation plots for the other scenarios are available in the appendix.}
    \label{fig:acfcomparison}
\end{figure}

\begin{figure}
    \centering
    \includegraphics[width=0.7\textwidth]{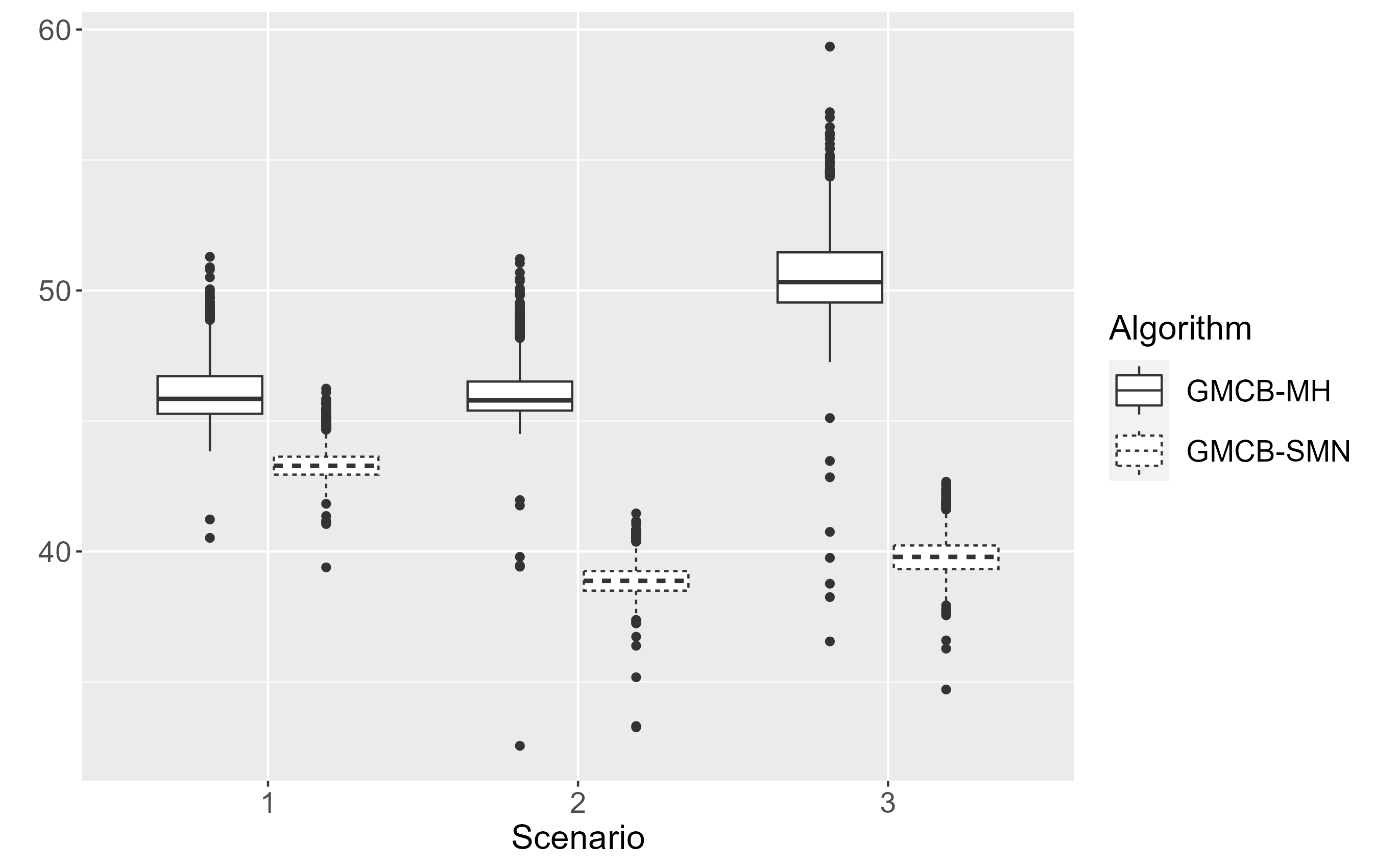}
    \caption{Comparison of GMCB-MH and GMCB-SMN based on total computation time in seconds for $100,000$ iterations, based on 2000 replications.}
    \label{fig:gmcb_time_comparison}
\end{figure}

\begin{remark}
    While it is possible to reduce the dimension of the posterior by integrating out $\Lambda$ and $\tau$, doing so does not result in any computational advantages -- in fact, the elements of $\gamma$ can no longer be sampled using standard distributions.
\end{remark}

\begin{remark}
A data augmentation algorithm based on a scale mixture of uniforms (SMU) representation is also possible. The SMU representation has been used for the Bayesian bridge in multiple regression by \cite{griffinhoff2020} and \cite{mallickyi2018}. Under this representation, truncated normal distributions are used to update $B$ and $\delta$. Although the algorithm is valid for $k_2\ge 2$ and allows block Gibbs sampling of $B$, Gibbs sampling of $\delta$ under the SMU representation requires $q\le n$. This restriction can be addressed using a SVD-based transformation, but it would require $q-n$ SVDs to be computed at every iteration of the algorithm when $q > n$.
\end{remark}

\section{Simulation Experiments}\label{sec:sim}
Estimates for $(B,\Omega)$ from GMCB are compared with estimates from HS-GHS \citep{lietal2021} and mSSL \citep{deshpandeetal2019}. Frequentist coverage of the 95\% posterior credible intervals produced by GMCB is examined in Appendix \ref{app:postci}. The following simulations compare the maximum a posteriori (MAP) estimates from both optimization algorithms for mSSL (DPE and DCPE) with the Bayes estimates $(\hat{B}_F,\hat{\Omega}_F)$ and $(\hat{B}_Q,\hat{\Omega}_S)$ from GMCB and HS-GHS. 
For ease of reference, Table \ref{table:abbreviations} lists the algorithms and their corresponding abbreviations.
The MATLAB code by \cite{lietal2021} was used for HS-GHS, and the \texttt{R} package \texttt{mSSL} \citep{mSSL} was used for mSSL. Code for replicating the simulations can be found at the GitHub repository for the \texttt{R} package \texttt{GMCB}.

\begin{table}[!ht]
    \centering
    \begin{tabular}{p{0.7\linewidth}p{0.2\linewidth}} \hline 
     Algorithm & Abbreviation \\ \hline
      Standard Metropolis-within-Gibbs algorithm for the generalized mean-covariance bridge prior & GMCB-MH \\
      Data augmentation algorithm for the generalized mean-covariance bridge prior & GMCB-SMN \\
      Multivariate spike-and-slab LASSO with dynamic posterior exploration & mSSL-DPE \\
      Multivariate spike-and-slab LASSO with dynamic conditional posterior exploration & mSSL-DCPE \\
      Horseshoe-graphical horseshoe & HSGHS \\
      Graphical horseshoe & GHS \\
      Graphical spike-and-slab LASSO & gSSL \\
      Maximum likelihood estimator & MLE \\ \hline
    \end{tabular}
    \caption{Algorithm names and their abbreviations.}
    \label{table:abbreviations}
\end{table}

\subsection{Low-dimensional Scenarios}\label{sec:lowdimsim}

In the following scenarios, $n=100$, $p=5$, and $q=5$. For each scenario, the rows of $X$ were drawn independently from $N_p(0,\Sigma_X)$, where the $ij$th element of $\Sigma_X$ is $0.7^{|i-j|}$, and 2000 responses were generated from the model
$$Y=XB+E, \quad E\sim MVN_{n,p}(0,I_n,\Omega^{-1}=\Sigma).$$
Before estimation, the design matrix was standardized and the response matrix was centered. For hyperparameter settings, the default values in the package \texttt{mSSL} were used for mSSL-DPE and mSSL-DCPE. For both GMCB algorithms, $k_1=0.5$, while $k_2=2$ for GMCB-SMN and $k_2=4$ for GMCB-MH. An empirical prior specified using the method-of-moments was used for $\gamma$. The priors on the regularization parameters differed between scenarios. Extensive empirical work indicated that a Gamma$(1, 1)$ and Gamma$(40, 0.5)$ mixture prior seems to be a reasonable starting point. The results presented use this prior for both $\Lambda$ and $\tau$ in Scenario 1 and for $\tau$ in Scenario 3. A Gamma$(0.1, 1)$ and Gamma$(2, 0.01)$ mixture prior was used for both $\Lambda$ and $\tau$ in Scenario 2 and for $\Lambda$ in Scenario 3.

The multivariate ESS approach \citep{vatsetal2019} was used to determine the number of iterations for the GMCB-MH, GMCB-SMN, and HS-GHS samplers. The minimum ESS for the 40 parameters of interest as computed using the \texttt{mcmcse} package \citep{mcmcse} was 8438, and the number of iterations was selected so that the multivariate ESS was approximately twice the minimum ESS.

\paragraph{Scenario 1:} The entries of the matrix $B$ were independently drawn from $N(2,0.001^2)$. The covariance matrix is compound symmetric with $\Sigma_{ij}=0.7^{I\{i\ne j\}}$, so $\Omega$ is a dense matrix. The parameters of the modified Cholesky decomposition are
\begin{align*}
    \gamma &= (1, 0.51, 0.424, 0.388, 0.368)^\top
\end{align*}
and 
\begin{align*}
    \delta_{j,k} &= \begin{cases}
    \begin{aligned}
        0.7,\quad &j = 2,\\
        0.412,\quad &j = 3,\\
        0.292,\quad &j = 4, \\
        0.226,\quad &j = 5.
    \end{aligned}
    \end{cases}
\end{align*}

GMCB-MH, GMCB-SMN, and the HS-GHS algorithm were run for $1.5\times 10^5$, $2.5\times 10^4$, and $2.5\times 10^4$ iterations, respectively.

\paragraph{Scenario 2:} The entries of the matrix $B$ were independently drawn from $N(5,1^2)$, and 12 entries were randomly set to zero. The covariance matrix has an AR(1) structure with $\Sigma_{ij}=0.7^{|i-j|}$, so $\Omega$ is banded. The parameters of the modified Cholesky decomposition are
\begin{align*}
    \gamma &= (1, 0.51, 0.51, 0.51, 0.51)^\top
\end{align*}
and 
\begin{align*}
    \delta_{j,k} &= \begin{cases}
    \begin{aligned}
        0.7,\quad &k = j-1,\\
        0,\quad &\text{otherwise.}
    \end{aligned}
    \end{cases}
\end{align*}

GMCB-MH, GMCB-SMN, and the HS-GHS algorithm were run for $1.5\times 10^5$, $2.5\times 10^4$, and $2.5\times 10^4$ iterations, respectively.

\paragraph{Scenario 3:} Three randomly selected coefficients in $B$ were independently drawn from $N(15,3^2)$, and the remaining coefficients were set to zero. The precision matrix was defined as in \cite{danielspourahmadi2002} scenario IIIA:
\begin{align*}
    \gamma &= (0.5, 0.7, 1, 3, 5)^\top
\end{align*}
and
\begin{align*}
    \delta_{j,k} &= \begin{cases}
    \begin{aligned}
        0.75+0.02k,\quad &k = j-1,\\
        0.4,\quad &k = j-2,\\
        0.2,\quad &k = j-3, \\
        0.1,\quad &k = j-4.
    \end{aligned}
    \end{cases}
\end{align*}
This results in a nonstationary covariance matrix.

GMCB-MH, GMCB-SMN, and the HS-GHS algorithm were run for $1.5\times 10^5$, $2.75\times 10^4$, and $3\times 10^4$ iterations, respectively. 

\vspace{4mm}

The average Frobenius loss for Scenarios 1--3 is displayed in Table \ref{table:simresults}. (The results under $L_S$ and $L_Q$, which show similar trends, are omitted). 
Estimation accuracy for $B$ is similar across all scenarios for the three models.
For estimation of $\Omega$, the average loss for mSSL was nearly double that of the fully Bayesian approaches when $\Omega$ was dense (Scenarios 1 and 3). 
When $\Omega$ was sparse (Scenario 2), mSSL slightly outperformed the fully Bayesian approaches.
The lack of local regularization parameters in mSSL may explain the difference in performance. mSSL uses the regularization parameter from the slab component of the spike-and-slab LASSO prior on the off-diagonal elements as the rate of the exponential prior on the diagonal elements of $\Omega$. In both Scenarios 1 and 3, the diagonal elements of $\Omega$ are much larger in magnitude than the off-diagonal elements, while the magnitudes are similar in Scenario 2. Without local parameters, the same amount of regularization is applied to both the diagonal and non-zero off-diagonal elements.

Comparing the results for GMCB-MH and GMCB-SMN, the utility of allowing larger values for $k_2$ in dense scenarios is evident in the estimation of $\Omega$. In Scenarios 1 and 3, the $\delta_j$ are dense with small signals, and extending the upper bound to $k_2=4$ places more prior mass on smaller, non-zero values, allowing GMCB-MH to outperform both GMCB-SMN and HS-GHS in both scenarios, while GMCB-SMN only outperforms HS-GHS in Scenario 3. When the $\delta_j$ are sparse (Scenario 2), extending the upper bound has little benefit, but both GMCB methods outperform HS-GHS. Increasing $k_2$ has little effect on estimation of $B$, even in Scenario 1, where $B$ is dense.

The maximum likelihood estimates (MLE) are also included in Table \ref{table:simresults} as a benchmark. None of the Bayesian methods outperform the MLE in estimation of $B$. However, the MLE for $\Omega$ is always significantly outperformed by GMCB and HS-GHS. As observed when comparing mSSL to the fully Bayesian approaches, mSSL outperforms the MLE only in Scenario 2, when $\Omega$ is sparse.

\begin{table}[!ht]
    \centering
    \caption{Average squared Frobenius loss of $\tilde{B}$ and $\tilde{\Omega}$ for GMCB, HS-GHS, and mSSL, based on 2000 replications. The maximum standard error for all values in the table was 0.073.}
    \begin{tabular}{ccccccccc}
    \hline\rule{0pt}{2.5ex}  
    Scenario & $(\tilde{B},\tilde{\Omega})$ & Method & $\lVert\Tilde{B}-B\rVert_F^2$ & $\lVert\Tilde{\Omega}-\Omega\rVert_F^2$ \\ \hline\rule{0pt}{3ex}
    \multirow{9}{*}{1} & \multirow{3}{*}{$(\hat{B}_F,\hat{\Omega}_F)$} & GMCB-MH & 0.944 & 2.454 \\
    & & GMCB-SMN & 0.944 & 3.612 \\
    & & HS-GHS & 0.920 & 3.471 \\ \cline{2-5}\rule{0pt}{3ex}
    & \multirow{3}{*}{$(\hat{B}_Q,\hat{\Omega}_S)$} & GMCB-MH & 0.941 & 2.287 \\
    & & GMCB-SMN & 0.940 & 3.247 \\
    & & HS-GHS & 0.918 & 2.787 \\ \cline{2-5}\rule{0pt}{3ex}
    & \multirow{2}{*}{MAP} & mSSL-DCPE & 0.898 & 5.925 \\
    & & mSSL-DPE & 0.899 & 5.972 \\ \cline{2-5}\rule{0pt}{3ex}
    & & MLE & 0.709 & 4.024 \\ \hline\rule{0pt}{3ex}
    \multirow{9}{*}{2} & \multirow{3}{*}{$(\hat{B}_F,\hat{\Omega}_F)$} & GMCB-MH & 1.286  & 1.706 \\
    & & GMCB-SMN & 1.227 & 1.265 \\ 
    & & HS-GHS & 1.437 & 1.862 \\ \cline{2-5}\rule{0pt}{3ex}
    & \multirow{3}{*}{$(\hat{B}_Q,\hat{\Omega}_S)$} & GMCB-MH & 1.292 & 1.439 \\
    & & GMCB-SMN & 1.229 & 1.081 \\ 
    & & HS-GHS & 1.445 & 1.589 \\ \cline{2-5}\rule{0pt}{3ex}
    & \multirow{2}{*}{MAP} & mSSL-DCPE & 1.195 & 1.038 \\ 
    & & mSSL-DPE & 1.194 & 1.045 \\ \cline{2-5}\rule{0pt}{3ex}
    & & MLE & 0.585 & 3.757 \\ \hline\rule{0pt}{3ex}
    \multirow{9}{*}{3} & \multirow{3}{*}{$(\hat{B}_F,\hat{\Omega}_F)$} & GMCB-MH & 3.331 & 0.741 \\
    & & GMCB-SMN & 3.293 & 0.827 \\ 
    & & HS-GHS & 3.338 & 0.934 \\ \cline{2-5}\rule{0pt}{3ex}
    & \multirow{3}{*}{$(\hat{B}_Q,\hat{\Omega}_S)$} & GMCB-MH & 3.339 & 0.678 \\
    & & GMCB-SMN & 3.298 & 0.747 \\ 
    & & HS-GHS & 3.346 & 0.761 \\ \cline{2-5}\rule{0pt}{3ex}
    & \multirow{2}{*}{MAP} & mSSL-DCPE & 3.413 & 1.614 \\ 
    & & mSSL-DPE & 3.402 & 1.524 \\ \cline{2-5}\rule{0pt}{3ex}
    & & MLE & 2.616 & 1.221 \\ \hline
    \end{tabular}
    \label{table:simresults}
\end{table}

\subsection{High-dimensional Scenarios}

The following scenarios assess the performance of GMCB in high-dimensional settings. Scenario 4 considers a multivariate linear regression setting, while Scenario 5 considers a mean-covariance estimation with no covariates.  
Due to the computational burden required to estimate multivariate ESS accurately in high-dimensional settings, the number of iterations for the GMCB-SMN and HS-GHS algorithms was fixed at 50,000 in Scenarios 4 and 5. The GMCB-MH algorithm is omitted due to the difficulty of tuning the Metropolis-Hastings steps in high dimensions.
For the mSSL algorithms, the maximum number of iterations permitted in Scenario 4 was 100,000 instead of the default 500. As the available implementation of mSSL is designed for multivariate linear regression and cannot perform an intercept-only multivariate linear regression, it was excluded from Scenario 5 and replaced with an estimator for $(B,\Omega)$ using the sample mean and the graphical spike-and-slab lasso (gSSL), the mean-zero counterpart of mSSL. An estimator based on the sample mean and the precision matrix estimate from the graphical horseshoe \citep{lietal2019}, or GHS, is also included.

\paragraph{Scenario 4:} The dimensions of the problem are $n=40$, $p=30$, and $q=50$. The covariate matrix was generated as in Scenarios 1--3, and 100 response matrices were generated from the multivariate linear regression model. 

Five percent of the elements of $B$ were randomly selected to be non-zero. The magnitudes of these elements were independently drawn from a Unif$(0.5,2)$ distribution, and the signs were independently drawn from a Ber$(0.5)$ distribution. 

The precision matrix was defined as in the clique setting of \cite{lietal2021}. Sixteen cliques with three members were randomly generated. The variance of $Y_i$ was set to 1, and for $i\ne j$, $\Omega_{ij}=0.75$ for $i$ and $j$ in the same clique and $\Omega_{ij}=0$ otherwise. Approximately 4\% of the strictly lower-triangular elements of $T$ are non-zero, with 17 out of the 49 $\delta_j$ vectors being zero vectors, 16 having one non-zero element, and the remaining 16 having two non-zero elements.
Note that this structure does not correspond to a precision matrix commonly associated with a naturally ordered response.

The priors used for GMCB-SMN were a mixture of Gamma$(0.1,1)$ and Gamma$(2,0.01)$ for $\Lambda$ and $\tau$ and Unif $(0.5,2)$ for $\alpha_b$ and $\alpha_d$.

\paragraph{Scenario 5:} The dimensions of the problem are $n=40$, $p=1$, and $q=50$. One hundred datasets were generated. For each dataset, $n$ i.i.d. observations were drawn from $N_q(B^\top,\Omega^{-1})$. The mean vector $B^\top$ was generated as follows. A vector was first drawn from a $q$-dimensional, mean-zero normal distribution with a compound symmetric covariance matrix with $ij$th element equal to $0.5^{I\{i\ne j\}}$. The sorted vector was then set equal to $B^\top$.

The covariance matrix $\Omega^{-1}$ was generated as in example 3 of \cite{leelee2021}, with 
\begin{equation*}
    \Omega^{-1}_{ij} = \frac{1}{2}\left( \lvert \lvert i - j \rvert + 1 \rvert^{1.4} - 2\lvert i - j \rvert^{1.4} + \lvert\lvert i -j \rvert - 1 \rvert^{1.4} \right)
\end{equation*}
This corresponds to a long-range dependence structure where $\Omega$ is not sparse but is a diagonally-dominant matrix.

The priors used for GMCB-SMN were a mixture of Gamma$(1,1)$ and Gamma$(40,0.5)$ for the prior on $\Lambda$, a mixture of Gamma$(0.1,1)$ and Gamma$(2,0.01)$ for the prior on $\tau$, and  Unif$(0.5,2)$ for $\alpha_b$ and $\alpha_d$.
\vspace{4mm}

Table \ref{table:highdimmulti} displays the average Frobenius loss for Scenario 4.
This scenario is expected to be challenging for GMCB, because the assumption of a naturally ordered response does not hold. However, estimation of $B$ by $\hat{B}_F$ was comparable for GMCB-SMN and HS-GHS, and GMCB-SMN outperforms both mSSL algorithms in estimating $B$. Furthermore, despite the lack of a natural ordering, GMCB-SMN significantly outperforms HS-GHS and mSSL in estimation of $\Omega$. 

The average Frobenius loss for Scenario 5 is displayed in Table \ref{table:highdimmeancov}. Although the natural ordering assumption made by GMCB is satisfied in this scenario, GMCB is outperformed by HS-GHS. Unlike the previous scenarios, this scenario does not include covariates. The difference in relative performance between this scenario and the previous scenarios suggests that GMCB greatly benefits from the presence of covariates when estimating $\Omega$. In addition, the difference in estimation error for $\Omega$ lies in estimation of the off-diagonal elements excluding the superdiagonal and subdiagonal. 
These elements range in value from $-0.074$ to $-0.002$, so strong shrinking of these elements will not greatly increase Frobenius loss when $\Omega$ is estimated directly as in HS-GHS, gSSL, and GHS. However, when $\Omega$ is estimated indirectly through a decomposition as in GMCB, estimation error of the factors may be magnified when estimating $\Omega$. 

\begin{table}[!ht]
    \centering
    \caption{Average squared Frobenius loss of $\tilde{B}$ and $\tilde{\Omega}$ for GMCB, HS-GHS, and mSSL, based on 100 replications for Scenario 4. The maximum standard error for all values in the table was 0.138.}
    \begin{tabular}{ccccccccc}
    \hline\rule{0pt}{2.5ex}  
    Scenario & $(\tilde{B},\tilde{\Omega})$ & Method & $\lVert\Tilde{B}-B\rVert_F^2$ & $\lVert\Tilde{\Omega}-\Omega\rVert_F^2$ \\ \hline\rule{0pt}{3ex}
    \multirow{6}{*}{4} & \multirow{2}{*}{$(\hat{B}_F,\hat{\Omega}_F)$} & 
     GMCB-SMN & 26.10 & 29.46 \\
    & & HS-GHS & 25.77 & 37.21 \\ \cline{2-5}\rule{0pt}{3ex}
    & \multirow{2}{*}{$(\hat{B}_Q,\hat{\Omega}_S)$} & 
    GMCB-SMN & 38.47 & 28.14 \\
    & & HS-GHS & 26.89 & 45.58 \\ \cline{2-5}\rule{0pt}{3ex}
    & \multirow{2}{*}{MAP} & mSSL-DCPE & 43.14 & 69.72 \\
    & & mSSL-DPE & 45.52 & 70.42 \\ \hline
    \end{tabular}
    \label{table:highdimmulti}
\end{table}

\begin{table}[!ht]
    \centering
    \caption{Average squared Frobenius loss of $\tilde{B}$ and $\tilde{\Omega}$ for GMCB, HS-GHS, and mSSL, based on 100 replications for Scenario 5. The maximum standard error for all values in the table was 0.051.}
    \begin{tabular}{ccccccccc}
    \hline\rule{0pt}{2.5ex}  
    Scenario & $(\tilde{B},\tilde{\Omega})$ & Method & $\lVert\Tilde{B}-B\rVert_F^2$ & $\lVert\Tilde{\Omega}-\Omega\rVert_F^2$ \\ \hline\rule{0pt}{3ex}
    \multirow{6}{*}{5} & \multirow{2}{*}{$(\hat{B}_F,\hat{\Omega}_F)$} &
    GMCB-SMN & 2.504 & 23.368 \\
    & & HS-GHS & 1.565 & 15.302 \\ \cline{2-5}\rule{0pt}{3ex}
    & \multirow{2}{*}{$(\hat{B}_Q,\hat{\Omega}_S)$} & 
    GMCB-SMN & 1.619 & 12.930 \\
    & & HS-GHS & 1.380 & 11.770 \\ \cline{2-5}\rule{0pt}{3ex}
    & \multirow{3}{*}{Samp. Mean + $\tilde{\Omega}$} & gSSL-MAP & 1.275 & 13.002 \\ 
    & & GHS-$\hat{\Omega}_F$  & 1.275 & 20.590 \\ 
    & & GHS-$\hat{\Omega}_S$ & 1.275 & 12.062 \\ \hline
    \end{tabular}
    \label{table:highdimmeancov}
\end{table}

\subsection{Comparison of Computational Effort}\label{sec:compeffort}
The computational efficiency of the GMCB and HS-GHS algorithms were compared based on the computation time and multivariate ESS for 1e5 iterations. The mSSL algorithms, which produce MAP estimates rather than posterior samples, are omitted from this comparison.
For the scenarios in Section \ref{sec:lowdimsim}, the multivariate ESS of HS-GHS is comparable to that of GMCB-SMN in Scenario 1 but lower in the other two scenarios (Table \ref{table:esscomparison}). 
The total computation time of HS-GHS is much higher in all three scenarios (Table \ref{table:compcomparison}), because it always uses the approach of \cite{bhattacharyaetal2016} for sampling multivariate normals when updating $B$.

However, when $p > n$, the linear scaling of computational complexity with $p$ for HS-GHS makes it much faster than GMCB-SMN, which has computational complexity $O(np^2q^3)$ when using the method of \cite{bhattacharyaetal2016}. Table \ref{table:compcomparison_highdim} compares the computation time for 500 iterations for $n=100$, $p = 120$, and $q = 50$. Because 500 iterations are insufficient for effectively estimating over 7000 parameters, estimation results for this scenario are omitted. As expected, the required computation time is significantly higher for GMCB-SMN. Note that GMCB-MH is faster than both GMCB-SMN and HS-GHS for a fixed number of iterations, but achieving an acceptable multivariate ESS likely requires far more iterations for GMCB-MH.

\begin{table}[!ht]
    \centering
    \caption{Average multivariate ESS for GMCB-MH, GMCB-SMN, and HS-GHS for the scenarios in Section \ref{sec:sim}, based on 2000 replications. The maximum standard error for all values in the table was 1.916.}
    \begin{tabular}{ccccc}
    \hline\rule{0pt}{2.5ex}  
    Scenario & $(\tilde{B},\tilde{\Omega})$ & GMCB-MH & GMCB-SMN & HS-GHS \\ \hline\rule{0pt}{3ex}
    \multirow{2}{*}{1} & $(\hat{B}_F,\hat{\Omega}_F)$ & 13470 & 77180 & 77400 \\ 
    & $(\hat{B}_Q,\hat{\Omega}_S)$ & 13250 & 81720 & 82400  \\ \hline\rule{0pt}{3ex}
    
    \multirow{2}{*}{2} & $(\hat{B}_F,\hat{\Omega}_F)$ & 13572 & 83320 & 75670 \\ 
    & $(\hat{B}_Q,\hat{\Omega}_S)$ & 14070 & 83100 & 79300  \\ \hline\rule{0pt}{3ex}
    
    \multirow{2}{*}{3} & $(\hat{B}_F,\hat{\Omega}_F)$ & 13760 & 72040 & 59800 \\ 
    & $(\hat{B}_Q,\hat{\Omega}_S)$ & 14070 & 71480 & 60400  \\  \hline
    \end{tabular}
    \label{table:esscomparison}
\end{table}

\begin{table}[!ht]
    \centering
    \caption{Average computation time in seconds for GMCB-MH, GMCB-SMN, and HS-GHS for the scenarios in Section \ref{sec:sim}, based on 2000 replications. The maximum standard error for all values in the table was 0.0677.}
    \begin{tabular}{cccc}
    \hline\rule{0pt}{2.5ex}  
    Scenario & GMCB-MH & GMCB-SMN & HS-GHS \\ \hline\rule{0pt}{3ex}
    1$\ $ & 46.11 & 43.33 & 1062 \\ 
    2 & 46.08 & 38.90 & 1060 \\ 
    3 & 50.55 & 39.80 & 1284 \\ \hline
    \end{tabular}
    \label{table:compcomparison}
\end{table}

\begin{table}[!ht]
    \centering
    \caption{Average computation time in hours for GMCB-MH, GMCB-SMN, and HS-GHS for a scenario with $n=100$, $p = 120$, $q = 50$, based on 50 replications. The maximum standard error for all values in the table was 0.037.}
    \begin{tabular}{cccc}
    \hline\rule{0pt}{2.5ex}  
    GMCB-MH & GMCB-SMN & HS-GHS \\ \hline\rule{0pt}{3ex}
    2.6 & 46 & 6.3 \\ \hline
    \end{tabular}
    \label{table:compcomparison_highdim}
\end{table}
 
\section{The WISP Survey Example}\label{sec:wisp}
The following dataset provides an example of a naturally-ordered response and demonstrates the computational feasibility of the GMCB model for mean-covariance estimation on a large $n$ and large $q$ dataset without covariates. As such, the focus is not analysis. If this dataset were analyzed to answer a research question of interest, appropriate methods should be applied. GMCB may be considered when mean and covariance estimation is necessary for such methods.

The WFC3 Infrared Spectroscopic Parallel (WISP) Survey is a pure parallel Hubble Space Telescope (HST) program, meaning the data were collected by the Wide Field Camera 3 (WFC3) while other HST instruments were in use. The survey used the WFC3's two near-infrared grisms\footnote{Grisms are combinations of a diffraction grating and a prism that can produce dispersed images of light spectra for all objects in the field of view \citep{grisms}.} \citep{ateketal2010},
which cover different wavelengths -- the $G_{102}$ covers the 800-1150 nanometer (nm) range, while the $G_{141}$ covers the 1075-1700 nm range \citep{wfc3}. By combining the spectra from the two grisms, it is possible to detect multiple emission lines for each object \citep{ateketal2010}. The detection and identification of these lines are necessary for the scientific goals of the WISP survey, as they provide the distance of the galaxies and allow the measurement of galaxies' physical properties \citep{dickinsonetal2018}. Visual inspection of these spectra would be time-consuming: the WISP survey has observed 483 fields\footnote{Data is only available for 432 \citep{dickinsonetal2018}.} \citep{baronchellietal2020}, and each field includes hundreds, or even thousands, of spectra. 

One-dimensional spectra (i.e., the flux, or brightness, at each observed wavelength) are extracted from the dispersed images and used for the detection of emission lines. In these spectra, spurious features, e.g., due to contamination by stellar diffraction spikes, nearby continuum sources, or zeroth orders, may be mistaken for emission lines \citep{ateketal2010}. However, visual inspection of the two-dimensional images and extracted spectra provides classifications for the emission lines as either genuine or spurious \citep{dickinsonetal2018}.

The one-dimensional spectra are an example of the type of data GMCB is most suited for analyzing. The flux measurements are ordered by the observed wavelengths, satisfying the assumption of a naturally-ordered response. Although HS-GHS and mSSL do not make such an assumption, these methods can also be considered when mean and covariance estimation is necessary for answering questions of interest. 

To prepare the data for this example, the spectra considered were restricted to fields that were covered by both grisms. Data processing involved re-arranging the flux measurements so that they were ordered by wavelength and appropriately combining measurements for the same wavelength. After this processing, $135,386$ spectra from 258 fields, each with $q=253$ unique wavelengths, remained. These spectra were then divided into classes based on the presence or absence of emission lines as determined by a WISP emission-line catalog constructed by \cite{bagleyetal2020}. Based on the catalog, $129,529$ of the $135,386$ spectra were classified as having no emission lines and $3565$ were classified as having emission lines. The remaining $2292$ spectra were excluded from the analysis, as the degree of agreement between reviewers in identifying the emission lines was low. Due to class imbalance, each class was further divided into a training and test sample, with 70\% of the class in the training sample. This resulted in $90,670$ spectra with no emission lines and $2495$ spectra with emission lines in the training sample. For a method to be successfully applied to this dataset, it must be able to accommodate a dataset with $90,670$ observations and a response with dimension $q=253$. 

Covariates were not included in this example, and GMCB, HS-GHS, and mSSL were considered for application to the dataset. As previously mentioned, the available implementation of mSSL is unable to estimate a mean-only model. For HS-GHS, the algorithm could not be run with $n=90,670$, $p=1$, and $q=253$. Updating the mean would require inversion of a 22,939,510 by 22,939,510 ($nq \times nq$) matrix on each iteration of the algorithm, which could not be completed with 128 GB of RAM. For the GMCB algorithms, use of GMCB-MH is difficult, as it requires 32,133 component-wise Metropolis-Hastings updates, and the time required to tune the proposal step sizes and obtain estimates was prohibitive. Furthermore, the computation time required to run GMCB-MH was much longer than the time required for GMCB-SMN. GMCB-MH required an average of 6--8 hours for 1000 iterations, while GMCB-SMN required only 20--30 minutes for 2000 iterations.

\section{Final Remarks}\label{sec:finalremarks}
Simultaneous mean and covariance estimation under a multivariate linear model was proposed using a novel penalized regression prior. The GBR prior allows the practitioner to address uncertainty in the regularization and penalty parameters in a principled manner by averaging over the posterior distribution. This is useful for simultaneous mean and covariance estimation, where there may be little prior information about the appropriate amount of penalization for the covariance or precision matrix. In the development of GMCB-SMN, selection of the sampling strategy for $B$ based on the relative size of $p$ and $n$ enables the algorithm to handle data with large $n$, as demonstrated by the WISP survey example, and makes it much faster than the available HS-GHS algorithm for $p \ll n$.

There are a few extensions of this work that may be of particular interest for analyzing longitudinal data. It is common to measure covariates at each time point, so that the covariates associated with a single response vector $Y$ may be a $q\times p$ matrix rather than a single vector of covariates. The mean regression coefficient $B$ is then a $p$-dimensional vector instead of a $p\times q$ matrix. The Gaussian likelihood can still be reparameterized into a sequence of autoregressions in this case, though efficient sampling of $\delta$ and $B$ is likely to be more challenging than in the framework considered here.

Consider also the case where the covariance matrix of $Y$ is known to be a sum of two positive definite matrices $A$ and $E$. The regression framework used here can be extended to such cases by observing that if $Y\sim N_q(\mu, A + E)$, this distribution is the marginal distribution of $Y$ in the hierarchy
\begin{align*}
    Y|Z&\sim N_q(\mu, E), \\
    Z&\sim N_q(0,A).
\end{align*}
The joint distribution of $Y$ and $Z$ is the product of two normal distributions, each of which can be rewritten in the regression framework of the modified Cholesky decomposition. Thus both $A$ and $E$ can be estimated using a prior such as the GMCB prior. This hierarchy can be generalized to fit linear mixed models, as well as error-in-variable linear regression models.

\section*{Acknowledgement}

The authors thank Claudia Scarlata and Hugh Dickinson for providing the WISP survey data. The authors would also like to thank Claudia Scarlata for helpful discussions on the WISP survey example. Jones was partially supported by NSF grant DMS-2152746.

\FloatBarrier
\bibliographystyle{apalike}
\bibliography{References}

\begin{thebibliography}{}

\bibitem[Armagan, 2009]{armagan2009}
Armagan, A. (2009).
\newblock Variational bridge regression.
\newblock In {\em Artificial Intelligence and Statistics}, pages 17--24. PMLR.

\bibitem[Atek et~al., 2010]{ateketal2010}
Atek, H., Malkan, M., McCarthy, P., Teplitz, H.~I., Scarlata, C., Siana, B.,
  Henry, A., Colbert, J.~W., Ross, N.~R., Bridge, C., Bunker, A.~J., Dressler,
  A., Fosbury, R. A.~E., Martin, C., and Shim, H. (2010).
\newblock The \uppercase{WFC}3 \uppercase{I}nfrared \uppercase{S}pectroscopic
  \uppercase{P}arallel (\uppercase{WISP}) \uppercase{S}urvey.
\newblock {\em The Astrophysical Journal}, 723:104--115.

\bibitem[Bagley et~al., 2020]{bagleyetal2020}
Bagley, M.~B., Scarlata, C., Mehta, V., Teplitz, H., Baronchelli, I.,
  Eisenstein, D.~J., Pozzetti, L., Cimatti, A., Rutkowski, M., Wang, Y., and
  Merson, A. (2020).
\newblock \uppercase{HST} grism-derived forecasts for future galaxy redshift
  surveys.
\newblock {\em The Astrophysical Journal}, 897(98).

\bibitem[Baronchelli et~al., 2010]{baronchellietal2020}
Baronchelli, I., Scarlata, C.~M., Rodighiero, G., {n}oz, L. R.-M., Bonato, M.,
  Bagley, M., Henry, A., Rafelski, M., Malkan, M., Colbert, J., Dai, Y.~S.,
  Dickinson, H., Mancini, C., Mehta, V., Morselli, L., and Teplitz, H.~I.
  (2010).
\newblock Identification of single spectral lines through supervised machine
  learning in a large \uppercase{HST} survey (\uppercase{WISP}): \uppercase{A}
  pilot study for \uppercase{E}uclid and \uppercase{WFIRST}.
\newblock {\em The Astrophysical Journal Supplement Series}, 249(12).

\bibitem[Bhadra and Mallick, 2013]{bhadramallick2013}
Bhadra, A. and Mallick, B.~K. (2013).
\newblock Joint high-dimensional \uppercase{B}ayesian variable and covariance
  selection with an application to e\uppercase{QTL} analysis.
\newblock {\em Biometrics}, 69:447--457.

\bibitem[Bhattacharya et~al., 2016]{bhattacharyaetal2016}
Bhattacharya, A., Chakraborty, A., and Mallick, B.~K. (2016).
\newblock Fast sampling with \uppercase{G}aussian scale mixture priors in
  high-dimensional regression.
\newblock {\em Biometrika}, 103(4):985--991.

\bibitem[Bottolo et~al., 2021]{bottoloetal2021}
Bottolo, L., Banterle, M., Richardson, S., Ala-Korpela, M., J\"{a}rvelin,
  M.-R., and Lewin, A. (2021).
\newblock A computationally efficient \uppercase{B}ayesian seemingly unrelated
  regressions model for high-dimensional quantitative trait loci discovery.
\newblock {\em Journal of the Royal Statistical Society, Series C},
  70:886--908.

\bibitem[Breiman and Friedman, 1997]{breimanfriedman1997}
Breiman, L. and Friedman, J.~H. (1997).
\newblock Predicting multivariate responses in multiple linear regression.
\newblock {\em Journal of the Royal Statistical Society, Series B},
  59(1):3--54.

\bibitem[Cai et~al., 2013]{caietal2013}
Cai, T.~T., Li, H., Liu, W., and Xie, J. (2013).
\newblock Covariate-adjusted precision matrix estimation with an application in
  genetical genomics.
\newblock {\em Biometrika}, 100(1):139--156.

\bibitem[Carvalho et~al., 2010]{carvalhoetal2010}
Carvalho, C.~M., Polson, N.~G., and Scott, J.~G. (2010).
\newblock The horseshoe estimator for sparse signals.
\newblock {\em Biometrika}, 97(2):465--480.

\bibitem[Daniels and Pourahmadi, 2002]{danielspourahmadi2002}
Daniels, M.~J. and Pourahmadi, M. (2002).
\newblock Bayesian analysis of covariance matrices and dynamic models for
  longitudinal data.
\newblock {\em Biometrika}, 89(3):553--566.

\bibitem[Deshpande, 2019]{mSSL}
Deshpande, S.~K. (2019).
\newblock {\em mSSL: \uppercase{T}he \uppercase{M}ultivariate
  \uppercase{S}pike-and-\uppercase{S}lab \uppercase{LASSO}}.
\newblock R package version 1.0.

\bibitem[Deshpande et~al., 2019]{deshpandeetal2019}
Deshpande, S.~K., Ro{\v c}kov\`{a}, V., and George, E.~I. (2019).
\newblock Simultaneous variable and covariance selection with the multivariate
  spike-and-slab \uppercase{LASSO}.
\newblock {\em Journal of Computational and Graphical Statistics},
  28(4):921--931.

\bibitem[Devroye, 2009]{devroye2009}
Devroye, L. (2009).
\newblock Random variate generation for exponentially and polynomially tilted
  stable distributions.
\newblock {\em ACM Transactions on Modeling and Computer Simulation},
  19(4):1--20.

\bibitem[Dickinson et~al., 2018]{dickinsonetal2018}
Dickinson, H., Scarlata, C., Fortson, L., Bagley, M., Mehta, V., Phillips, J.,
  Baronchelli, I., Dai, S., Hathi, N., Henry, A., Malkan, M., Rafelski, M.,
  Teplitz, H., Zanella, A., and Lintott, C. (2018).
\newblock Galaxy \uppercase{N}urseries: \uppercase{C}rowdsourced analysis of
  slitless spectroscopic data.

\bibitem[Dressel and Marinelli, 2023]{wfc3}
Dressel, L. and Marinelli, M. (2023).
\newblock Wide field camera 3 instrument handbook, version 15.0.
\newblock Baltimore: STScI.

\bibitem[Eddelbuettel et~al., 2023a]{rcpp}
Eddelbuettel, D., Francois, R., Allaire, J., Ushey, K., Kou, Q., Russell, N.,
  Ucar, I., Bates, D., and Chambers, J. (2023a).
\newblock {\em Rcpp: Seamless \uppercase{R} and \uppercase{C}++
  \uppercase{I}ntegration}.
\newblock R package version 1.0.11.

\bibitem[Eddelbuettel et~al., 2023b]{rcpparmadillo}
Eddelbuettel, D., Francois, R., Bates, D., Ni, B., and Sanderson, C. (2023b).
\newblock {\em Rcpp\uppercase{A}rmadillo: '\uppercase{R}cpp'
  \uppercase{I}ntegration for the '\uppercase{A}rmadillo' \uppercase{T}emplated
  \uppercase{L}inear \uppercase{A}lgebra \uppercase{L}ibrary}.
\newblock R package version 0.12.0.1.0.

\bibitem[Engels et~al., 2014]{engelsetal2014}
Engels, W.~J., Frey, R., and Ott, C.~D. (2014).
\newblock Multivariate regression analysis of gravitational waves from rotating
  core collapse.
\newblock {\em Physical Review D}, 90:124026.

\bibitem[Flegal et~al., 2021]{mcmcse}
Flegal, J.~M., Hughes, J., Vats, D., Dai, N., Gupta, K., and Maji, U. (2021).
\newblock {\em mcmcse: \uppercase{M}onte \uppercase{C}arlo \uppercase{S}tandard
  \uppercase{E}rrors for \uppercase{MCMC}}.
\newblock Riverside, CA, and Kanpur, India.
\newblock R package version 1.5-0.

\bibitem[Frank and Friedman, 1993]{frankfriedman1993}
Frank, I.~E. and Friedman, J.~H. (1993).
\newblock A statistical view of some chemometrics regression tools.
\newblock {\em Technometrics}, 35(2):109--135.

\bibitem[Fu, 1998]{fu1998}
Fu, W.~J. (1998).
\newblock Penalized tegressions: The bridge versus the \uppercase{L}asso.
\newblock {\em Journal of Computational and Graphical Statistics},
  7(3):397--416.

\bibitem[Gramacy and Pantaleo, 2010]{gramacypantaleo2010}
Gramacy, R.~B. and Pantaleo, E. (2010).
\newblock Shrinkage regression for multivariate inference with missing data,
  and an application to portfolio balancing.
\newblock {\em Bayesian Analysis}, 5(2):237--262.

\bibitem[Griffin and Hoff, 2020]{griffinhoff2020}
Griffin, M. and Hoff, P.~D. (2020).
\newblock Testing sparsity-inducing penalties.
\newblock {\em Journal of Computational and Graphical Statistics},
  29(1):128--139.

\bibitem[Hoerl and Kennard, 1970]{hoerlkennard1970}
Hoerl, A.~E. and Kennard, R.~W. (1970).
\newblock Ridge regression: \uppercase{B}iased estimation for nonorthogonal
  problems.
\newblock {\em Technometrics}, 12(1):55--67.

\bibitem[Hofert et~al., 2020]{copula}
Hofert, M., Kojadinovic, I., Maechler, M., and Yan, J. (2020).
\newblock {\em copula: \uppercase{M}ultivariate \uppercase{D}ependence with
  \uppercase{C}opulas}.
\newblock R package version 1.0-1.

\bibitem[Huang et~al., 2006]{huangetal2006}
Huang, J.~Z., Liu, N., Pourahmadi, M., and Liu, L. (2006).
\newblock Covariance matrix selection and estimation via penalised normal
  likelihood.
\newblock {\em Biometrika}, 93(1):85--98.

\bibitem[Kang and Deng, 2020]{kangdeng2020}
Kang, X. and Deng, X. (2020).
\newblock An improved modified \uppercase{C}holesky decomposition approach for
  precision matrix estimation.
\newblock {\em Journal of Statistical Computation and Simulation}.

\bibitem[Kidd and Katzfuss, 2022]{kiddkatzfuss2021}
Kidd, B. and Katzfuss, M. (2022).
\newblock Bayesian nonstationary and nonparametric covariance estimation for
  large spatial data.
\newblock {\em Bayesian Analysis}, 17(1):291--351.

\bibitem[Lee and Lee, 2021]{leelee2021}
Lee, K. and Lee, J. (2021).
\newblock Estimating large precision matrices via modified \uppercase{C}holesky
  decomposition.
\newblock {\em Statistica Sinica}, 31:173--196.

\bibitem[Lee et~al., 2019]{leeetal2019}
Lee, K., Lee, J., and Lin, L. (2019).
\newblock Minimax posterior convergence rates and model selection consistency
  in high-dimensional \uppercase{DAG} models based on sparse
  \uppercase{C}holesky factors.
\newblock {\em The Annals of Statistics}, 47(6):3413--3437.

\bibitem[Lee and Lin, 2023]{leelin2023}
Lee, K. and Lin, L. (2023).
\newblock Scalable \uppercase{B}ayesian high-dimensional local dependence
  learning.
\newblock {\em Bayesian Analysis}, 18(1):25--47.

\bibitem[Lee and Liu, 2012]{lee12}
Lee, W. and Liu, Y. (2012).
\newblock Simultaneous multiple response regression and inverse covariance
  matrix estimation via penalized \uppercase{G}aussian maximum likelihood.
\newblock {\em Journal of Multivariate Analysis}, 111:241--255.

\bibitem[Levina et~al., 2008]{levinaetal2008}
Levina, E., Rothman, A., and Zhu, J. (2008).
\newblock Sparse estimation of large covariance matrices via a nested
  \uppercase{L}asso penalty.
\newblock {\em The Annals of Applied Statistics}, 2(1):245--263.

\bibitem[Li et~al., 2019]{lietal2019}
Li, Y., Craig, B.~A., and Bhadra, A. (2019).
\newblock The graphical horseshoe estimator for inverse covariance matrices.
\newblock {\em Journal of Computational and Graphical Statistics},
  28(3):747--757.

\bibitem[Li et~al., 2021]{lietal2021}
Li, Y., Datta, J., Craig, B.~A., and Bhadra, A. (2021).
\newblock Joint mean-covariance estimation via the horseshoe.
\newblock {\em Journal of Multivariate Analysis}, 183:104716.

\bibitem[Mallick and Yi, 2018]{mallickyi2018}
Mallick, H. and Yi, N. (2018).
\newblock Bayesian bridge regression.
\newblock {\em Journal of Applied Statistics}, 45(6):988--1008.

\bibitem[Park and Casella, 2008]{parkcasella2008}
Park, T. and Casella, G. (2008).
\newblock The \uppercase{B}ayesian \uppercase{L}asso.
\newblock {\em Journal of the American Statistical Association},
  103(482):681--686.

\bibitem[Polson et~al., 2014]{polsonetal2014}
Polson, N.~G., Scott, J.~G., and Windle, J. (2014).
\newblock The \uppercase{B}ayesian bridge.
\newblock {\em Journal of the Royal Statistical Society, Series B: Statistical
  Methodology}, 76:713--733.

\bibitem[Pourahmadi, 1999]{pourahmadi1999}
Pourahmadi, M. (1999).
\newblock Joint mean-covariance models with applications to longitudinal data:
  \uppercase{U}nconstrained parameterisation.
\newblock {\em Biometrika}, 86(3):677--690.

\bibitem[Ro{\v c}kov\`{a} and George, 2018]{rockovageorge2018}
Ro{\v c}kov\`{a}, V. and George, E.~I. (2018).
\newblock The spike-and-slab \uppercase{LASSO}.
\newblock {\em Journal of the American Statistical Association},
  113(521):431--444.

\bibitem[Rothman et~al., 2010]{rothmanetal2010multireg}
Rothman, A.~J., Levina, E., and Zhu, J. (2010).
\newblock Sparse multivariate regression with covariance estimation.
\newblock {\em Journal of Computational and Graphical Statistics},
  19(4):947--962.

\bibitem[Rue, 2001]{rue2001}
Rue, H. (2001).
\newblock Fast sampling of \uppercase{G}aussian \uppercase{M}arkov random
  fields.
\newblock {\em Journal of the Royal Statistical Society, Series B},
  63(2):325--338.

\bibitem[Saquer et~al., 2024]{saqueretal2024}
Saquer, N., Iqbal, R., Ellis, J.~D., and Yoshimatsu, K. (2024).
\newblock Infrared spectra prediction using attention-based graph neural
  networks.
\newblock {\em Digital Discovery}, 3:602--609.

\bibitem[Smith and Kohn, 2002]{smithkohn2002}
Smith, M. and Kohn, R. (2002).
\newblock Parsimonious covariance matrix estimation for longitudinal data.
\newblock {\em Journal of the American Statistical Association},
  97(460):1141--1153.

\bibitem[Tibshirani, 1996]{tibshirani1996}
Tibshirani, R. (1996).
\newblock Regression shrinkage and selection via the \uppercase{L}asso.
\newblock {\em Journal of the Royal Statistical Society, Series B},
  58(1):267--288.

\bibitem[Vats et~al., 2019]{vatsetal2019}
Vats, D., Flegal, J.~M., and Jones, G.~L. (2019).
\newblock Multivariate output analysis for \uppercase{M}arkov chain
  \uppercase{M}onte \uppercase{C}arlo.
\newblock {\em Biometrika}, 106(2):321--337.

\bibitem[Wagaman and Levina, 2009]{wagamanlevina2009}
Wagaman, A.~S. and Levina, E. (2009).
\newblock Discovering sparse covariance structures with the isomap.
\newblock {\em Journal of Computational and Graphical Statistics},
  18(3):551--572.

\bibitem[Weiner, 2012]{grisms}
Weiner, B. (2012).
\newblock \uppercase{CANDELS} \uppercase{S}pectroscopy: \uppercase{T}he
  \uppercase{I}nfrared \uppercase{G}rism.

\bibitem[West, 1987]{west1987}
West, M. (1987).
\newblock On scale mixtures of normal distributions.
\newblock {\em Biometrika}, 74(3):646--648.

\bibitem[Wu and Pourahmadi, 2003]{wu03}
Wu, W.~B. and Pourahmadi, M. (2003).
\newblock Nonparametric estimation of large covariance matrices of longitudinal
  data.
\newblock {\em Biometrika}, 90:831--844.

\bibitem[Xiang, 2020]{xiang2020}
Xiang, D. (2020).
\newblock Fully \uppercase{B}ayesian penalized regression with a generalized
  bridge prior.

\bibitem[Yang and Berger, 1994]{yangberger1994}
Yang, R. and Berger, J.~O. (1994).
\newblock Estimation of a covariance matrix using the reference prior.
\newblock {\em The Annals of Statistics}, 22(3):1195--1211.

\bibitem[Yuasa and Kubokawa, 2021]{yuasakubokawa2021}
Yuasa, R. and Kubokawa, T. (2021).
\newblock Generalized \uppercase{B}ayes estimators with closed forms for the
  normal mean and covariance matrices.

\bibitem[Zheng et~al., 2017]{zhengetal2017}
Zheng, H., Tsui, K.-W., Kang, X., and Deng, X. (2017).
\newblock Cholesky-based model averaging for covariance matrix estimation.
\newblock {\em Statistical Theory and Related Fields}, 1(1):48--58.

\end{thebibliography}

\appendixpage
\begin{appendix}
\section{Proof of Theorem 1}\label{app:tailrobustnessproof}
\begin{proof}
Notice that
\begin{align*}
    m(y) = &\frac{1}{k_2 - k_1}\int \int \frac{1}{\sqrt{2\pi}} \exp\left(-\frac{(\beta-y)^2}{2}\right) \frac{\alpha}{2^{1/\alpha+1}\Gamma(1/\alpha)} \\
    & \qquad\qquad\quad \cdot \left\lbrace \frac{1}{2} \frac{f_1^{e_1}}{\Gamma(e_1)}\frac{\Gamma(e_1 + 1/\alpha)}{(|\beta|^\alpha/2 + f_1)^{e_1 + 1/\alpha}} + \frac{1}{2} \frac{f_2^{e_2}}{\Gamma(e_2)}\frac{\Gamma(e_2 + 1/\alpha)}{(|\beta|^\alpha/2 + f_2)^{e_2 + 1/\alpha}} \right\rbrace \,d\beta \,d\alpha.
\end{align*}
For ease of computation assume that $e_1 = e_2 = e$, $f_1 = f_2 = f$ and therefore
\begin{align*}
    m(y) = & \frac{1}{k_2 - k_1}\int \int \frac{1}{\sqrt{2\pi}} \exp\left(-\frac{(\beta-y)^2}{2}\right) \frac{\alpha}{2^{1/\alpha+1}\Gamma(1/\alpha)} \\
    & \qquad\qquad\qquad \cdot \frac{f^{e}}{\Gamma(e)}\frac{\Gamma(e + 1/\alpha)}{(|\beta|^\alpha/2 + f)^{e + 1/\alpha}} \,d\beta \,d\alpha
\end{align*}
and
\begin{align*}
    m'(y) = & \frac{1}{k_2 - k_1}\int \int \frac{1}{\sqrt{2\pi}} (\beta - y) \exp\left(-\frac{(\beta-y)^2}{2}\right) \frac{\alpha}{2^{1/\alpha+1}\Gamma(1/\alpha)} \\
    & \qquad\qquad\qquad \cdot \frac{f^{e}}{\Gamma(e)}\frac{\Gamma(e + 1/\alpha)}{(|\beta|^\alpha/2 + f)^{e + 1/\alpha}} \,d\beta \,d\alpha.
\end{align*}
Set
\begin{align*}
    g(y) &= \int \exp\left(-\frac{(\beta-y)^2}{2}\right) \frac{1}{(|\beta|^\alpha/2 + f)^{e + 1/\alpha}} \,d\beta \\
    h(y) &= \int (\beta-y) \exp\left(-\frac{(\beta-y)^2}{2}\right) \frac{1}{(|\beta|^\alpha/2 + f)^{e + 1/\alpha}} \,d\beta.
\end{align*}
Let $t = \beta - y$ so that
\begin{align*}
    g(y) &= \int \exp\left(-\frac{t^2}{2}\right) \frac{1}{(|t + y|^\alpha/2 + f)^{e + 1/\alpha}} \,dt \\
    h(y) &= \int t \exp\left(-\frac{t^2}{2}\right) \frac{1}{(|t + y|^\alpha/2 + f)^{e + 1/\alpha}} \,dt
\end{align*}
while if $t = - t$ then 
\begin{align*}
    g(y) &= \int \exp\left(-\frac{t^2}{2}\right) \frac{1}{(|y - t|^\alpha/2 + f)^{e + 1/\alpha}} \,dt \\
    h(y) &= - \int t \exp\left(-\frac{t^2}{2}\right) \frac{1}{(|y - t|^\alpha/2 + f)^{e + 1/\alpha}} \,dt.
\end{align*}
Hence
\begin{align*}
    2g(y) &= \int \exp\left(-\frac{t^2}{2}\right) \Big[(|y + t|^\alpha/2 + f)^{-e - 1/\alpha} + (|t - y|^\alpha/2 + f)^{-e - 1/\alpha}\Big] \,dt \\
    2h(y) &= \int t \exp\left(-\frac{t^2}{2}\right) \Big[(|y + t|^\alpha/2 + f)^{-e - 1/\alpha} - (|t - y|^\alpha/2 + f)^{-e - 1/\alpha}\Big] \,dt.
\end{align*}
Notice that both functions under the integral sign are even. Thus
\begin{align*}
    g(y) &= \int_0^\infty \exp\left(-\frac{t^2}{2}\right) \Big[(|y + t|^\alpha/2 + f)^{-e - 1/\alpha} + (|t - y|^\alpha/2 + f)^{-e - 1/\alpha}\Big] \,dt \\
    h(y) &= \int_0^\infty t \exp\left(-\frac{t^2}{2}\right) \Big[(|y + t|^\alpha/2 + f)^{-e - 1/\alpha} - (|t - y|^\alpha/2 + f)^{-e - 1/\alpha}\Big] \,dt.
\end{align*}
Suppose $y > 0$ (a nearly identical proof will hold with $y < 0$). Then
\begin{align}
    g(y) & > \int_0^\infty \exp\left(-\frac{t^2}{2}\right) (|t - y|^\alpha/2 + f)^{-e-1/\alpha}\,dt \nonumber \\
    & > \int_0^{y/2} \exp\left(-\frac{t^2}{2}\right) (|t - y|^\alpha/2 + f)^{-e-1/\alpha}\,dt \nonumber \\
    & = \int_0^{y/2} \exp\left(-\frac{t^2}{2}\right) ((y - t)^\alpha/2 + f)^{-e-1/\alpha}\,dt \nonumber \\
    & > \int_0^{y/2} \exp\left(-\frac{t^2}{2}\right) (y^\alpha/2 + f)^{-e-1/\alpha}\,dt \nonumber \\
    & = C_1 \cdot (y^\alpha/2 + f)^{-e-1/\alpha} \label{eq:proofgylb},
\end{align}
where $C_1 < \frac{\sqrt{2\pi}}{2}$. Next consider
\begin{align*}
    -h(y) &= \int_0^\infty t \exp\left(-\frac{t^2}{2}\right) \Big[(|t - y|^\alpha/2 + f)^{-e - 1/\alpha} - (|y + t|^\alpha/2 + f)^{-e - 1/\alpha} \Big] \,dt \\
    &= h_1(y) + h_2(y),
\end{align*}
where
\begin{align*}
    h_1(y) &= \int_0^{y/2} t \exp\left(-\frac{t^2}{2}\right) \Big[((y - t)^\alpha/2 + f)^{-e - 1/\alpha} - ((y + t)^\alpha/2 + f)^{-e - 1/\alpha} \Big] \,dt \\
    h_2(y) &= \int_{y/2}^\infty t \exp\left(-\frac{t^2}{2}\right) \Big[(|t - y|^\alpha/2 + f)^{-e - 1/\alpha} - (|y + t|^\alpha/2 + f)^{-e - 1/\alpha} \Big] \,dt.
\end{align*}
Set
\begin{equation*}
    S(t) = ((y - t)^\alpha/2 + f)^{-e-1/\alpha} - ((y + t)^\alpha/2 + f)^{-e-1/\alpha}.
\end{equation*}
The next step is to show that, when $0 < t < y/2$,
\begin{equation*}
    S(t) < V(t) = 4\alpha(e + 1/\alpha)((y/2)^\alpha/2 + f)^{-e-1/\alpha}\frac{t}{y}.
\end{equation*}
First notice that $S(0) = V(0) = 0$ so that it suffices to show that $S'(t) < V'(t)$. Consider
\begin{align*}
    S'(t) &= \frac{\alpha(e + 1/\alpha)}{2}(y - t)^{\alpha - 1}\left(\frac{(y - t)^\alpha}{2} + f\right)^{-(e + 1/\alpha + 1)} \\
    &\qquad + \frac{\alpha(e + 1/\alpha)}{2}(y + t)^{\alpha - 1}\left(\frac{(y + t)^\alpha}{2} + f\right)^{-(e + 1/\alpha + 1)} \\
    V'(t) &= ((y/2)^\alpha/2 + f)^{-e-1/\alpha}\cdot\frac{1}{y}\cdot 4\alpha(e + 1/\alpha).
\end{align*}
Notice that
$$f(x) = x^{\alpha - 1}\left(\frac{x^\alpha}{2} + f\right)^{-(e + 1/\alpha + 1)} = x^{-(2 + \alpha e)}\left(\frac{1}{2} + \frac{f}{x^\alpha}\right)^{-(e + 1 + 1/\alpha)}$$
is a decreasing function when $x$ is large and $0 < k_1 < \alpha < k_2\le 4$. Thus, when $y$ is large and $0 < t < y/2$, $y - t > y/2$ and $y + t > y > y/2$. Therefore,
\begin{align*}
    yS'(t) &= \frac{\alpha(e + 1/\alpha)}{2}y(y - t)^{\alpha - 1}\left(\frac{(y - t)^\alpha}{2} + f\right)^{-(e + 1/\alpha + 1)} \\
    &\qquad + \frac{\alpha(e + 1/\alpha)}{2}y(y + t)^{\alpha - 1}\left(\frac{(y + t)^\alpha}{2} + f\right)^{-(e + 1/\alpha + 1)} \\
    & < \alpha(e + 1/\alpha)y(y/2)^{\alpha - 1}\left(\frac{(y/2)^\alpha}{2} + f\right)^{-(e + 1/\alpha + 1)} \\
    & < \alpha(e + 1/\alpha)y(y/2)^{\alpha - 1}\left(\frac{(y/2)^\alpha}{2}\right)^{-1}\left(\frac{(y/2)^\alpha}{2} + f\right)^{-(e + 1/\alpha)} \\
    & = 4\alpha(e + 1/\alpha)((y/2)^\alpha/2 + f)^{-e-1/\alpha},
\end{align*}
which gives $S'(t) < V'(t)$. So $S(t) < V(t)$. Then
\begin{align*}
    h_1(y) &= \int_0^{y/2} t\exp\left(-\frac{t^2}{2}\right)S(t)\,dt \\
    &< \int_0^{y/2} t^2\exp\left(-\frac{t^2}{2}\right)((y/2)^\alpha/2 + f)^{-e-1/\alpha}\cdot \frac{1}{y}\cdot 4\alpha(e + 1/\alpha)\,dt \\
    &<((y/2)^\alpha/2 + f)^{-e-1/\alpha}\cdot \frac{1}{y}\cdot 4\alpha(e + 1/\alpha) \int_{-\infty}^\infty t^2\exp\left(-\frac{t^2}{2}\right)\,dt \\
    &< ((y/2)^\alpha/2 + f)^{-e-1/\alpha}\cdot \frac{1}{y}\cdot 4\alpha(e + 1/\alpha)\cdot 2\sqrt{2\pi}
\end{align*}
and
\begin{align*}
    h_2(y) &= \int_{y/2}^\infty t\exp\left(-\frac{t^2}{2}\right) \Big[(| t- y|^\alpha/2 + f)^{-e-1/\alpha} - (|y + t|^\alpha/2 + f)^{-e-1/\alpha}\Big] \,dt\\
    &< \int_{y/2}^\infty t\exp\left(-\frac{t^2}{2}\right) (| t- y|^\alpha/2 + f)^{-e-1/\alpha} \,dt \\
    & < f^{-e-1/\alpha} \int_{y/2}^\infty t\exp\left(-\frac{t^2}{2}\right) \,dt \\
    &= f^{-e-1/\alpha} \exp\left(-\frac{y^2}{8}\right).
\end{align*}
Therefore,
\begin{align}
    -h(y) &= h_1(y) + h_2(y) \nonumber\\
    & < ((y/2)^\alpha/2 + f)^{-e-1/\alpha}\cdot \frac{1}{y}\cdot 4\alpha(e + 1/\alpha)\cdot 2\sqrt{2\pi} + f^{-e-1/\alpha} \exp\left(-\frac{y^2}{8}\right). \label{eq:proofhyupper}
\end{align}
By equations \eqref{eq:proofgylb} and \eqref{eq:proofhyupper},
\begin{align*}
    0 &< -\cfrac{m'(y)}{m(y)} = \cfrac{\int -h(y) \cfrac{\alpha \Gamma(e + 1/\alpha)}{2^{1/\alpha}\Gamma(1/\alpha)}\,d\alpha}{\int g(y) \cfrac{\alpha \Gamma(e + 1/\alpha)}{2^{1/\alpha}\Gamma(1/\alpha)}\,d\alpha} \\
    &< \frac{1}{y}\cfrac{\int \cfrac{\alpha \Gamma(e + 1/\alpha)}{2^{1/\alpha + 1}\Gamma(1/\alpha)} \Big[((y/2)^\alpha/2 + f)^{-e-1/\alpha}8\alpha(e+1/\alpha)\Big]\,d\alpha}{\int \cfrac{\alpha \Gamma(e + 1/\alpha)}{2^{1/\alpha + 1}\Gamma(1/\alpha)}C_1(y^\alpha/2 + f)^{-e-1/\alpha}(2\pi)^{-1/2}\,d\alpha} \\
    &\qquad + \cfrac{\int \cfrac{\alpha \Gamma(e + 1/\alpha)}{2^{1/\alpha + 1}\Gamma(1/\alpha)} \Big[ f^{-e-1/\alpha}(2\pi)^{-1/2}\exp\left(-\frac{y^2}{8}\right)\Big]\,d\alpha}{\int \cfrac{\alpha \Gamma(e + 1/\alpha)}{2^{1/\alpha + 1}\Gamma(1/\alpha)}C_1(y^\alpha/2 + f)^{-e-1/\alpha}(2\pi)^{-1/2}\,d\alpha} \\
    &\equiv \frac{1}{y}A_1 + A_2.
\end{align*}
Notice that $A_1$ is bounded by a constant since
\begin{align*}
    A_1 &= \cfrac{\int \cfrac{\alpha \Gamma(e + 1/\alpha)}{2^{1/\alpha + 1}\Gamma(1/\alpha)} \Big[((y/2)^\alpha/2 + f)^{-e-1/\alpha}8\alpha(e+1/\alpha)\Big]\,d\alpha}{\int \cfrac{\alpha \Gamma(e + 1/\alpha)}{2^{1/\alpha + 1}\Gamma(1/\alpha)}C_1(y^\alpha/2 + f)^{-e-1/\alpha}(2\pi)^{-1/2}\,d\alpha} \\
    &\le \cfrac{\int \cfrac{k_2 \Gamma(e + 1/k_1)}{2^{1/k_2 + 1}\Gamma(1/k_2)} \Big[((y/2)^\alpha/2 + f)^{-e-1/\alpha}8k_2(e+1/k_1)\Big]\,d\alpha}{\int \cfrac{k_1 \Gamma(e + 1/k_2)}{2^{1/k_1 + 1}\Gamma(1/k_1)}C_1(y^\alpha/2 + f)^{-e-1/\alpha}(2\pi)^{-1/2}\,d\alpha} \\
    &\equiv C_{e,f,k_1,k_2}^{(0)}\cfrac{\int((y/2)^\alpha/2 + f)^{-e-1/\alpha}\,d\alpha}{\int(y^\alpha/2 + f)^{-e-1/\alpha}\,d\alpha} \\
    &< C_{e,f,k_1,k_2}^{(0)} \cfrac{\int(y^\alpha/2 + f)^{-e-1/\alpha}\,d\alpha}{\int(y^\alpha/2 + f)^{-e-1/\alpha}\,d\alpha} = C_{e,f,k_1,k_2}^{(0)}.
\end{align*}
Notice that, because of the term $\exp\lbrace-y^2/8\rbrace$, $A_2$ is a higher order term of $A_1$ when $y$ goes to infinity. Then 
$$\frac{1}{y}A_1 + A_2 = \frac{1}{y}C^{(1)}_{e,fk_1,k_2} + o\left(\frac{1}{y}\right)C^{(2)}_{e,f,k_1,k_2}\sim O(1/y)$$
since $C^{(1)}_{e,fk_1,k_2} < C^{(0)}_{e,fk_1,k_2}$ and $C^{(2)}_{e,fk_1,k_2}$ are constants depending on the choice of $k_1, k_2,e,f$. Therefore $\lim_{y\rightarrow\infty}m'(y)/m(y)=0$.
\end{proof}

\section{Sampling Algorithms for GMCB}\label{app:samplingalgorithms}
Recall that $j\mathbin{:}k$ denotes the indices $j$ through $k$ and that $A^j$ denotes column $j$ of matrix $A$. In the following, IG$(a,b)$ and Gamma$(a,b)$ denote the shape-rate parameterization of the inverse Gamma and Gamma distributions, respectively.

The fully specified model under the GMCB prior is as follows. The likelihood is given by
\begin{equation*}
Y_i|X_i,B,\Omega^{-1}\sim N_q(B^\top X_i,\Omega^{-1}),\qquad i = 1,\ldots,n.
\end{equation*}
Under the modified Cholesky decomposition, this is equivalent to 
\begin{align*}
    Y^1|X,B^1,\gamma_1 &\sim N_n\left(XB^1,\gamma_1I_n\right), \\
    Y^j|Y^{1:(j-1)},X,B^{1:j},\delta_j,\gamma_j &\sim N_n\left(XB^j+\left(Y^{1:(j-1)}-XB^{1:(j-1)} \right)\delta_j,\gamma_jI_n\right),\quad j=2,\ldots,q. 
\end{align*}
The prior on $B^j$ for $j=1,\ldots,q$ is given by
\begin{align*}
\nu(B^j|\Lambda^j,\alpha_b,\gamma_j) &= \left(\frac{\alpha_b}{2^{1/\alpha_b+1}\gamma_j^{1/\alpha_b}\Gamma(1/\alpha_b)}\right)^{p}\left(\prod_{k=1}^p\lambda_{kj}\right)^{1/\alpha_b}\exp\left\lbrace-\frac{1}{2\gamma_j}\sum_{k=1}^p\lambda_{kj}|B_{kj}|^{\alpha_b}\right\rbrace,\\
\lambda_{kj}&\sim\frac{1}{2}\text{Gamma}(e_{kj,1},f_{kj,1})+\frac{1}{2}\text{Gamma}(e_{kj,2},f_{kj,2}), \\
\alpha_b &\sim\text{Unif}(k_1,k_2), \quad 0 < k_1\le1,\ k_2\ge 2,
\end{align*}
and the prior for $\delta_j$ for $j=2,\ldots,q$ is given by
\begin{align*}
\nu(\delta_j|\tau_j,\gamma_j,\alpha_d)&=\left(\frac{\alpha_d}{2^{1/\alpha_d+1}\gamma_j^{1/\alpha_d}\Gamma(1/\alpha_d)}\right)^{j-1}\left(\prod_{k=1}^{j-1}\tau_{j,k}\right)^{1/\alpha_d}\exp\left\lbrace-\frac{1}{2\gamma_j}\sum_{k=1}^{j-1}\tau_{j,k}|\delta_{j,k}|^{\alpha_d}\right\rbrace,\\
\tau_{j,k}&\sim\frac{1}{2}\text{Gamma}(s_{jk,1},t_{jk,1})+\frac{1}{2}\text{Gamma}(s_{jk,2},t_{jk,2}), \\
\alpha_d &\sim\text{Unif}(k_1,k_2).
\end{align*}
Finally, $\gamma_j \sim\text{IG}(a,b),\ j=1,\ldots,q.$

Define $\gamma=(\gamma_1,\ldots,\gamma_q)^\top$ and $\tau=(\tau_2,\tau_3^\top,\ldots,\tau_q^\top )^\top$. Let $I\lbrace\cdot\rbrace$ denote the indicator function and $\lVert\cdot \rVert_2$ denote the Euclidean norm.  The posterior distribution of this model is characterized by
\begin{align}
& q(B,\Lambda,\alpha_b,\delta,\gamma,\tau,\alpha_d|Y,X)\nonumber\\
\propto\ & \left(\frac{1}{\gamma_1}\right)^{n/2}\exp\left\lbrace-\frac{1}{2\gamma_1}\lVert Y^1-XB^1\rVert_2^2\right\rbrace\nonumber\\
&\qquad\cdot\prod_{j=2}^q\left[\left(\frac{1}{\gamma_j}\right)^{n/2}\exp\left\lbrace-\frac{1}{2\gamma_j}\lVert Y^j-XB^j-(Y^{1:(j-1)}-XB^{1:(j-1)})\delta_j\rVert_2^2\right\rbrace\right]\nonumber\\
&\qquad\cdot\left(\frac{\alpha_b}{2^{1/\alpha_b}\Gamma(1/\alpha_b)}\right)^{pq}\left(\prod_{k=1}^p\prod_{j=1}^q\frac{\lambda_{kj}}{\gamma_j}\right)^{1/\alpha_b}\exp\left\lbrace-\frac{1}{2}\sum_{k=1}^p\sum_{j=1}^q\frac{\lambda_{kj}}{\gamma_j}|B_{kj}|^{\alpha_b}\right\rbrace\nonumber\\
&\qquad\cdot\prod_{k=1}^p\prod_{j=1}^q\left[\frac{f_{kj,1}^{e_{kj,1}}}{\Gamma(e_{kj,1})}\lambda_{kj}^{e_{kj}-1}\exp\left\lbrace-\lambda_{kj}f_{kj,1}\right\rbrace + \frac{f_{kj,2}^{e_{kj,2}}}{\Gamma(e_{kj,2})}\lambda_{kj}^{e_{kj,2}-1}\exp\left\lbrace-\lambda_{kj}f_{kj,2}\right\rbrace\right]\nonumber\\
&\qquad\cdot I\lbrace k_1\le\alpha_b\le k_2\rbrace I\lbrace k_1\le\alpha_d\le k_2\rbrace \prod_{j=1}^q\left(\frac{1}{\gamma_j}\right)^{a+1}\exp\left\lbrace-b/\gamma_j \right\rbrace\nonumber\\
&\qquad\cdot \prod_{j=2}^q\left[\left(\frac{\alpha_d}{2^{1/\alpha_d}\gamma_j^{1/\alpha_d}\Gamma(1/\alpha_d)}\right)^{j-1}\left(\prod_{k=1}^{j-1}\tau_{j,k}\right)^{1/\alpha_d}\exp\left\lbrace-\frac{1}{2\gamma_j}\sum_{k=1}^{j-1}\tau_{j,k}|\delta_{j,k}|^{\alpha_d}\right\rbrace\right.\nonumber\\
&\left.\qquad\quad\cdot\prod_{k=1}^{j-1}\left(\frac{t_{jk,1}^{s_{jk,1}}}{\Gamma(s_{jk,1})}\tau_{j,k}^{s_{jk,1}-1}\exp\left\lbrace-\tau_{j,k}t_{jk,1}\right\rbrace+\frac{t_{jk,2}^{s_{jk,2}}}{\Gamma(s_{jk,2})}\tau_{j,k}^{s_{jk,2}-1}\exp\left\lbrace-\tau_{j,k}t_{jk,2}\right\rbrace\right)\right].\label{eq:posterior}
\end{align}

\subsection{Posterior Conditionals}\label{app:postcond_mh}
The following posterior conditionals are used to construct the GMCB-MH algorithm.

\subsubsection{\texorpdfstring{$B_{kj}$}{bkj}}
Let $B_{(kj)}$ denote the matrix $B$ with $B_{kj}$ removed. The kernel of the posterior conditional for $B_{kj}$ is given by
\begin{align*}
&q(B_{kj} |Y,X, B_{(kj)},\Lambda,\alpha_b,\delta,\gamma,\tau,\alpha_d)\\
\propto&\exp\left\lbrace-\frac{1}{2\gamma_j}\lambda_{kj}|B_{kj}|^{\alpha_b}\right\rbrace\exp\left\lbrace-\frac{1}{2\gamma_1}\lVert Y^1-XB^1\rVert_2^2\right\rbrace\\
&\qquad\cdot\prod_{j=2}^q\exp\left\lbrace-\frac{1}{2\gamma_j}\lVert Y^j-XB^j-(Y^{1:(j-1)}-XB^{1:(j-1)})\delta_j\rVert_2^2\right\rbrace\\
=&\exp\left\lbrace-\frac{1}{2\gamma_j}\lambda_{kj}|B_{kj}|^{\alpha_b}\right\rbrace\exp\left\lbrace-\frac{1}{2}\tr((Y-XB)\Omega(Y-XB)^\top)\right\rbrace,
\end{align*}
where in the last line, $\Omega$ is a function of $\delta$ and $\gamma$ through the modified Cholesky decomposition in equation \eqref{eq:modcholprec}.

\subsubsection{\texorpdfstring{$\lambda_{kj}$}{lambdakj}}
Let $\Lambda_{(kj)}$ denote the matrix $\Lambda$ with $\lambda_{kj}$ removed. The kernel of the posterior conditional for $\lambda_{kj}$ is
\begin{align}
& q(\lambda_{kj}|Y,X,B,\Lambda_{(kj)},\alpha_b,\delta,\gamma,\tau,\alpha_d) \nonumber\\
\propto\ & \frac{f_{kj,1}^{e_{kj,1}}}{\Gamma(e_{kj,1})}\lambda_{kj}^{e_{kj,1}+1/\alpha_b-1}\exp\left\lbrace-\lambda_{kj}\left(f_{kj,1}+\frac{1}{2\gamma_j}|B_{kj}|^{\alpha_b}\right)\right\rbrace\nonumber\\
&\qquad+\frac{f_{kj,2}^{e_{kj,2}}}{\Gamma(e_{kj,2})}\lambda_{kj}^{e_{kj,2}+1/\alpha_b-1}\exp\left\lbrace-\lambda_{kj}\left(f_{kj,2}+\frac{1}{2\gamma_j}|B_{kj}|^{\alpha_b}\right)\right\rbrace.\label{eq:lambdapostcond}
\end{align}
Integrating over the right side of equation \eqref{eq:lambdapostcond} with respect to $\lambda_{kj}$, the normalizing constant is
$$\frac{f_{kj,1}^{e_{kj,1}}}{\Gamma(e_{kj,1})}\frac{\Gamma(e_{kj,1}+1/\alpha_b)}{(f_{kj,1}+\frac{1}{2\gamma_j}|B_{kj}|^{\alpha_b})^{e_{kj,1}+1/\alpha_b}}+\frac{f_{kj,2}^{e_{kj,2}}}{\Gamma(e_{kj,2})}\frac{\Gamma(e_{kj,2}+1/\alpha_b)}{(f_{kj,2}+\frac{1}{2\gamma_j}|B_{kj}|^{\alpha_b})^{e_{kj,2}+1/\alpha_b}}.$$
Let 
\begin{align*}
w_1&=\frac{f_{kj,1}^{e_{kj,1}}}{\Gamma(e_{kj,1})}\frac{\Gamma(e_{kj,1}+1/\alpha_b)}{(f_{kj,1}+\frac{1}{2\gamma_j}|B_{kj}|^{\alpha_b})^{e_{kj,1}+1/\alpha_b}}\\
w_2&=\frac{f_{kj,2}^{e_{kj,2}}}{\Gamma(e_{kj,2})}\frac{\Gamma(e_{kj,2}+1/\alpha_b)}{(f_{kj,2}+\frac{1}{2\gamma_j}|B_{kj}|^{\alpha_b})^{e_{kj,2}+1/\alpha_b}}.
\end{align*}
Then the posterior conditional distribution of $\lambda_{kj}$ is given by
\begin{align*}
& q(\lambda_{kj}|Y,X,B,\Lambda_{(kj)},\alpha_b,\delta,\gamma,\tau,\alpha_d)\\
=\ & \frac{f_{kj,1}^{e_{kj,1}}/\Gamma(e_{kj,1})}{w_1+w_2}\lambda_{kj}^{e_{kj,1}+1/\alpha_b-1}\exp\left\lbrace-\lambda_{kj}\left(f_{kj,1}+\frac{1}{2\gamma_j}|B_{kj}|^{\alpha_b}\right)\right\rbrace\nonumber\\
&\qquad+\frac{f_{kj,2}^{e_{kj,2}}/\Gamma(e_{kj,2})}{w_1+w_2}\lambda_{kj}^{e_{kj,2}+1/\alpha_b-1}\exp\left\lbrace-\lambda_{kj}\left(f_{kj,2}+\frac{1}{2\gamma_j}|B_{kj}|^{\alpha_b}\right)\right\rbrace.
\end{align*} 
Therefore,
\begin{align*}
\lambda_{kj}|Y,X,B,\Lambda_{(kj)},\alpha_b,\delta,\gamma,\tau,\alpha_d&\sim\frac{w_1}{w_1+w_2}\text{Gamma}\left(e_{kj,1}+\frac{1}{\alpha_b},f_{kj,1}+\frac{1}{2\gamma_j}|B_{kj}|^{\alpha_b}\right)\\
&\qquad+\frac{w_2}{w_1+w_2}\text{Gamma}\left(e_{kj,2}+\frac{1}{\alpha_b},f_{kj,2}+\frac{1}{2\gamma_j}|B_{kj}|^{\alpha_b}\right).
\end{align*}

\subsubsection{\texorpdfstring{$\delta_{j,k}$}{deltack}}
Let $\delta_{(j,k)}$ denote the vector $\delta$ with $\delta_{j,k}$ removed. The kernel of the posterior conditional for $\delta_{j,k}$ is given by
\begin{align*}
& q(\delta_{j,k}|Y,X,B,\Lambda,\alpha_b,\delta_{(j,k)},\gamma,\tau,\alpha_d)\\
\propto & \exp\left\lbrace-\frac{1}{2\gamma_j}\tau_{j,k}|\delta_{j,k}|^{\alpha_d}\right\rbrace\exp\left\lbrace-\frac{1}{2\gamma_j}\lVert Y^j-XB^j-(Y^{1:(j-1)}-XB^{1:(j-1)})\delta_j\rVert_2^2\right\rbrace.
\end{align*}

\subsubsection{\texorpdfstring{$\gamma$}{gamma}}\label{app:gammapostcond}
Let $\gamma_{(j)}$ denote the vector $\gamma$ with $\gamma_j$ removed. The kernel for the posterior conditional for $\gamma_1$ is given by
\begin{align*}
& q(\gamma_1|Y,X,B,\Lambda,\alpha_b,\delta,\tau,\gamma_{(1)},\alpha_d)\\
\propto & \left(\frac{1}{\gamma_1}\right)^{\frac{n}{2}+\frac{p}{\alpha_b}+a+1}\exp\left\lbrace-\frac{1}{\gamma_1}\left(\frac{1}{2}\lVert Y^1-XB^1\rVert_2^2+\frac{1}{2}\sum_{k=1}^p\lambda_{k1}|B_{k1}|^{\alpha_b}+b\right)\right\rbrace,
\end{align*}
so 
\begin{align*}
\gamma_1|Y,X,B,\Lambda,\alpha_b,\delta,\tau,\gamma_{(1)},\alpha_d\sim\text{Gamma}\left(\frac{n}{2}+\frac{p}{\alpha_b}+a,\frac{1}{2}\lVert Y^1-XB^1\rVert_2^2+\frac{1}{2}\sum_{k=1}^p\lambda_{k1}|B_{k1}|^{\alpha_b}+b\right).
\end{align*}

The kernel for the posterior conditional for $\gamma_j$, $j=2,\ldots,q$ is given by
\begin{align*}
& q(\gamma_j|Y,X,B,\Lambda,\alpha_b,\delta,\tau,\gamma_{(j)},\alpha_d)\\
\propto & \left(\frac{1}{\gamma_j}\right)^{\frac{n}{2}+\frac{j-1}{\alpha_d}+\frac{p}{\alpha_b}+a+1}\cdot\exp\left\lbrace-\frac{1}{\gamma_j}\left(\frac{1}{2}\lVert Y^j-XB^j-(Y^{1:(j-1)}-XB^{1:(j-1)})\delta_j\rVert_2^2\right.\right.\\
&\qquad\qquad\qquad\qquad\qquad\qquad\qquad\qquad\quad\quad\left.\left.+\frac{1}{2}\sum_{k=1}^{j-1}\tau_{j,k}|\delta_{j,k}|^{\alpha_d}+\frac{1}{2}\sum_{k=1}^p\lambda_{kj}|B_{kj}|^{\alpha_b}+b\right)\right\rbrace,
\end{align*}
so
\begin{align*}
\gamma_j&|Y,X,B,\Lambda,\alpha_b,\delta,\tau,\gamma_{(1)},\alpha_d\\
&\sim\text{Gamma}\left(\frac{n}{2}+\frac{j-1}{\alpha_d}+\frac{p}{\alpha_b}+a,\right.\\
&\left.\qquad\qquad\qquad \frac{1}{2}\lVert Y^j-XB^j-(Y^{1:(j-1)}-XB^{1:(j-1)})\delta_j\rVert_2^2+\frac{1}{2}\sum_{k=1}^{j-1}\tau_{j,k}|\delta_{j,k}|^{\alpha_d}\right.\\
&\left.\qquad\qquad\qquad\qquad\qquad\qquad\qquad\qquad\qquad\qquad\qquad\qquad\qquad+\frac{1}{2}\sum_{k=1}^p\lambda_{kj}|B_{kj}|^{\alpha_b}+b\right).
\end{align*}

\subsubsection{\texorpdfstring{$\tau_{j,k}$}{taujk}}
Let $\tau_{(j,k)}$ denote the vector $\tau$ with $\tau_{j,k}$ removed. The kernel for the posterior conditional for $\tau_{j,k}$ is given by
\begin{align}
& q(\tau_{j,k}|Y,X,B,\Lambda,\alpha_b,\delta,\gamma,\tau_{(j,k)},\alpha_d) \nonumber\\
\propto\ & \frac{t_{jk,1}^{s_{jk,1}}}{\Gamma(s_{jk,1})}\tau_{j,k}^{s_{jk,1}+1/\alpha_d-1}\exp\left\lbrace-\tau_{j,k}\left(t_{jk,1}+\frac{1}{2\gamma_j}|\delta_{j,k}|^{\alpha_d}\right)\right\rbrace\nonumber\\
&\qquad+\frac{t_{jk,2}^{s_{jk,2}}}{\Gamma(s_{jk,2})}\tau_{j,k}^{s_{jk,2}+1/\alpha_d-1}\exp\left\lbrace-\tau_{j,k}\left(t_{jk,2}+\frac{1}{2\gamma_j}|\delta_{j,k}|^{\alpha_d}\right)\right\rbrace.\label{eq:taupostcond}
\end{align}
Integrating over the right side of equation \eqref{eq:taupostcond} with respect to $\tau_{j,k}$, the normalizing constant is
$$\frac{t_{jk,1}^{s_{jk,1}}}{\Gamma(s_{jk,1})}\frac{\Gamma(s_{jk,1}+1/\alpha_d)}{(t_{jk,1}+\frac{1}{2\gamma_j}|\delta_{j,k}|^{\alpha_d})^{s_{jk,1}+1/\alpha_d}}+\frac{t_{jk,2}^{s_{jk,2}}}{\Gamma(s_{jk,2})}\frac{\Gamma(s_{jk,2}+1/\alpha_d)}{(t_{jk,2}+\frac{1}{2\gamma_j}|\delta_{j,k}|^{\alpha_d})^{s_{jk,2}+1/\alpha_d}}.$$
Let
\begin{align*}
d_1&=\frac{t_{jk,1}^{s_{jk,1}}}{\Gamma(s_{jk,1})}\frac{\Gamma(s_{jk,1}+1/\alpha_d)}{(t_{jk,1}+\frac{1}{2\gamma_j}|\delta_{j,k}|^{\alpha_d})^{s_{jk,1}+1/\alpha_d}}\\
d_2&=\frac{t_{jk,2}^{s_{jk,2}}}{\Gamma(s_{jk,2})}\frac{\Gamma(s_{jk,2}+1/\alpha_d)}{(t_{jk,2}+\frac{1}{2\gamma_j}|\delta_{j,k}|^{\alpha_d})^{s_{jk,2}+1/\alpha_d}}.
\end{align*}
Then the posterior conditional distribution of $\tau_{j,k}$ is given by
\begin{align*}
& q(\tau_{j,k}|Y,X,B,\Lambda,\alpha_b,\delta,\gamma,\tau_{(j,k)},\alpha_d)\\
=\ & \frac{t_{jk,1}^{s_{jk,1}}/\Gamma(s_{jk,1})}{d_1+d_2}\tau_{j,k}^{s_{jk,1}+1/\alpha_d-1}\exp\left\lbrace-\tau_{j,k}\left(t_{jk,1}+\frac{1}{2\gamma_j}|\delta_{j,k}|^{\alpha_d}\right)\right\rbrace\nonumber\\
&\qquad+\frac{t_{jk,2}^{s_{jk,2}}/\Gamma(s_{jk,2})}{d_1+d_2}\tau_{j,k}^{s_{jk,2}+1/\alpha_d-1}\exp\left\lbrace-\tau_{j,k}\left(t_{jk,2}+\frac{1}{2\gamma_j}|\delta_{j,k}|^{\alpha_d}\right)\right\rbrace.
\end{align*}
Therefore,
\begin{align*}
\tau_{j,k}|Y,X,B,\Lambda,\alpha_b,\delta,\gamma,\tau_{(j,k)},\alpha_d&\sim\frac{d_1}{d_1+d_2}\text{Gamma}\left(s_{jk,1}+\frac{1}{\alpha_d},t_{jk,1}+\frac{1}{2\gamma_j}|\delta_{j,k}|^{\alpha_d}\right)\\
&\qquad+\frac{d_2}{d_1+d_2}\text{Gamma}\left(s_{jk,2}+\frac{1}{\alpha_d},t_{jk,2}+\frac{1}{2\gamma_j}|\delta_{j,k}|^{\alpha_d}\right).
\end{align*}

\subsubsection{\texorpdfstring{$\alpha_b$}{alphab}}
The kernel of the posterior conditional for $\alpha_b$ is given by
\begin{align*}
& q(\alpha_b|Y,X,B,\Lambda,\delta,\gamma,\tau,\alpha_d)\\
\propto\ & \left(\frac{\alpha_b}{2^{1/\alpha_b}\Gamma(1/\alpha_b)}\right)^{pq}\left(\prod_{k=1}^p\prod_{j=1}^q\frac{\lambda_{kj}}{\gamma_j}\right)^{1/\alpha_b}\exp\left\lbrace-\frac{1}{2}\sum_{k=1}^p\sum_{j=1}^q\frac{\lambda_{kj}}{\gamma_j}|B_{kj}|^{\alpha_b}\right\rbrace I\lbrace k_1\le\alpha_b\le k_2\rbrace.
\end{align*}

\subsubsection{\texorpdfstring{$\alpha_d$}{alphad}}
The kernel of the posterior conditional for $\alpha_d$ is given by
\begin{align*}
&q(\alpha_d|Y,X,B,\Lambda,\alpha_b,\delta,\gamma,\tau)\\
\propto & \prod_{j=2}^q\left(\frac{\alpha_d}{2^{1/\alpha_d}\gamma_j^{1/\alpha_d}\Gamma(1/\alpha_d)}\right)^{j-1}\left(\prod_{k=1}^{j-1}\tau_{j,k}\right)^{1/\alpha_d}\exp\left\lbrace-\frac{1}{2\gamma_j}\sum_{k=1}^{j-1}\tau_{j,k}|\delta_{j,k}|^{\alpha_d}\right\rbrace I\lbrace k_1\le\alpha_d\le k_2\rbrace.
\end{align*}

\subsection{Posterior Conditionals under SMN}\label{app:postcond_smn}
The parameters $B$ and $\delta$ can be easily sampled under a SMN representation. For the exponential power distribution with density proportional to $\exp\lbrace-|x|^\alpha\rbrace$, the mixing density in terms of the standard deviation $\sigma$ is $h(\sigma)\propto\sigma^{-2}p_{\alpha/2}(\sigma^{-2})$, where $p_{\alpha/2}$ is the density of a positive stable distribution with index $\alpha/2<1$ \citep{west1987}. The mixing density in terms of the precision $\omega=1/\sigma^2$ is then $g(\omega)\propto \omega^{-1/2} p_{\alpha/2}(\omega)$.

\subsubsection{\texorpdfstring{$B_{kj}$}{bkj}}
Let $\omega_{kj}$ denote the latent variable associated with $B_{kj}$. Under the SMN representation, the prior on $B_{kj}$ is given by
\begin{align*}
    B_{kj}|\omega_{kj},\lambda_{kj},\gamma_j,\alpha_b&\sim N\left(0,\frac{1}{\omega_{kj}}\bigg(\frac{2\gamma_j}{\lambda_{kj}}\bigg)^{2/\alpha_b}\right),\\
    g(\omega_{kj}|\alpha_b)&\propto\omega_{kj}^{-1/2}\ p_{\alpha_b/2}(\omega_{kj}).
\end{align*}
Define $\omega\in\mathbb{R}^{p\times q}$ to be the matrix of latent variables associated with $B$. Then
\begin{align}
    &q(B,\omega|Y,X,\Lambda,\alpha_b,\delta,\gamma,\tau,\alpha_d) \nonumber\\
    \propto & \exp\left\lbrace-\frac{1}{2}\tr((Y-XB)\Omega(Y-XB)^\top)\right\rbrace\nonumber\\
    &\quad\cdot\prod_{k=1}^p\prod_{j=1}^q\left[\left(\frac{\omega_{kj}\lambda_{kj}^{2/\alpha_b}}{(2\gamma_j)^{2/\alpha_b}}\right)^{1/2}\exp\left\lbrace-\frac{\omega_{kj}\lambda_{kj}^{2/\alpha_b}}{2(2\gamma_j)^{2/\alpha_b}}B_{kj}^2\right\rbrace \omega_{kj}^{-1/2}p_{\alpha_b/2}(\omega_{kj})\right].\label{eq:bomega_smn}
\end{align}
Let $\omega_{(kj)}$ denote the matrix $\omega$ with $\omega_{kj}$ removed. The posterior conditional for $\omega_{kj}$ associated with equation \eqref{eq:bomega_smn} is given by
\begin{align*}
    q(\omega_{kj}|Y,X,B,\Lambda,\alpha_b,\delta,\gamma,\tau,\alpha_d,\omega_{(kj)})\propto\exp\left\lbrace-\frac{\lambda_{kj}^{2/\alpha_b}B_{kj}^2}{2(2\gamma_j)^{2/\alpha_b}}\omega_{kj}\right\rbrace p_{\alpha_b/2}(\omega_{kj}),
\end{align*}
which is an exponentially tilted positive stable distribution. The algorithm described in \cite{devroye2009} and a modified version of the implementation in the R package \texttt{copula} \citep{copula} is used to sample $\omega_{kj}$.

Let $\otimes$ and $\odot$ denote the Kronecker and Hadamard products, respectively. Define $[\gamma]_p$ to be the vector $\gamma$ with each of its elements repeated $p$ times and the $pq\times pq$ matrix $\Delta$ to be  $$\Delta=\diag[\vect(\omega)\odot\vect(\Lambda)^{2/\alpha_b}]\left(\diag[(2[\gamma]_p)^{2/\alpha_b}]\right)^{-1}.$$
Then the posterior conditional for $B$ associated with equation \eqref{eq:bomega_smn} is given by
\begin{align}
    &q(B|Y,X,\Lambda,\alpha_b,\delta,\gamma,\tau,\alpha_d,\omega)\nonumber\\
    \propto&\exp\left\lbrace-\frac{1}{2}\tr((Y-XB)\Omega(Y-XB)^\top)\right\rbrace\prod_{k=1}^p\prod_{j=1}^q\exp\left\lbrace-\frac{\omega_{kj}\lambda_{kj}^{2/\alpha_b}}{2(2\gamma_j)^{2/\alpha_b}}B_{kj}^2\right\rbrace \nonumber \\
    = &\exp\left\lbrace-\frac{1}{2}\tr((Y-XB)\Omega(Y-XB)^\top)\right\rbrace\exp\left\lbrace-\frac{1}{2}\vect(B)^\top\Delta\vect(B)\right\rbrace.\label{eq:bconditional_smn} 
\end{align}

When $p< n$, let $\hat{B}=(X^\top X)^{-1}X^\top Y$. Define $\Theta=\Omega^{-1}\otimes (X^\top X)^{-1}$ and $\Phi=\Theta^{-1}+\Delta$. Then equation \eqref{eq:bconditional_smn} is proportional to
\begin{align*}
    & \exp\left\lbrace-\frac{1}{2}\tr(X^\top X(B-\hat{B})\Omega(B-\hat{B})^\top )\right\rbrace\exp\left\lbrace-\frac{1}{2}\vect(B)^\top \Delta\vect(B)\right\rbrace\\
    = & \exp\left\lbrace-\frac{1}{2}[\vect(B)-\vect(\hat{B})]^\top \Theta^{-1}[\vect(B)-\vect(\hat{B})]\right\rbrace\exp\left\lbrace-\frac{1}{2}\vect(B)^\top \Delta\vect(B)\right\rbrace\\
    \propto&\exp\left\lbrace-\frac{1}{2}\left(\vect(B)^\top \Phi\vect(B)-2\vect(\hat{B})^\top \Theta^{-1}\vect(B)\right)\right\rbrace\\
    = & \exp\left\lbrace-\frac{1}{2}\left(\vect(B)^\top \Phi\vect(B)-2\vect(\hat{B})^\top \Theta^{-1}\Phi^{-1}\Phi\vect(B)\right)\right\rbrace,
\end{align*}
so
\begin{align}
    \vect(B)&|Y,X,\vect(\Lambda),\alpha_b,\delta,\gamma,\tau,\alpha_d,\vect(\omega) \sim N_{pq}\left(\Phi^{-1}\Theta^{-1}\vect(\hat{B}), \Phi^{-1}\right). \label{eq:smn_b_pln}
\end{align}

To show that equation \eqref{eq:bconditional_smn} is a normal distribution even when $(X^\top X)^{-1}$ does not exist requires a variable transformation. Let $X=UCV^\top$ denote the SVD of $X$, where $U\in\mathbb{R}^{n\times n}$ and $V\in\mathbb{R}^{p\times p}$ are orthonormal and $C\in\mathbb{R}^{n\times p}$. Let $r=\rank(X)$. Define
\begin{align*}
    w &= \max(p-r, 0) \\
    z &= \max(n-r, 0),
\end{align*}
and let $\psi\in\mathbb{R}^{r\times r}$ be the diagonal matrix with diagonal elements the positive singular values of $X$. Then 
\begin{equation*}
    C=\begin{pmatrix}
    \psi & 0_{r\times w} \\
    0_{z\times r} & 0_{z\times w}
    \end{pmatrix}.
\end{equation*}
Define $\eta=V^\top BT^\top$. The Jacobian of this transformation is a constant with respect to $\eta$, and equation \eqref{eq:bconditional_smn} can be rewritten as
\begin{align*}
    & q(\eta|Y,X,\Lambda,\alpha_b,\delta,\gamma,\tau,\alpha_d,\omega) \\
    \propto & \exp\left\lbrace-\frac{1}{2}\left[\tr[D^{-1}(- TY^\top UC\eta - \eta^\top C^\top U^\top YT^\top  + \eta^\top C^\top C\eta)] \right.\right.\\
     & \qquad\qquad\qquad \left.\left. +\vect(V\eta(T^{-1})^\top )^\top \Delta\vect(V\eta(T^{-1})^\top )\right]\right\rbrace\\
    = & \exp\left\lbrace-\frac{1}{2}\left[ -2\tr(D^{-1}TY^\top UC\eta) + \tr(D^{-1}\eta^\top C^\top C\eta) \right.\right.\\
    & \qquad\qquad\qquad \left.\left. + \vect(\eta)^\top (T^{-1} \otimes V)^\top  \Delta (T^{-1} \otimes V) \vect(\eta) \right]\right\rbrace\\
    = & \exp\left\lbrace-\frac{1}{2}\left[ -2\vect(C^\top U^\top YT^\top D^{-1})^\top \vect(\eta) + \vect(C\eta D^{-1/2})^\top \vect(C\eta D^{-1/2}) \right.\right. \\
    & \qquad\qquad\qquad \left.\left.+ \vect(\eta)^\top (T^{-1} \otimes V)^\top  \Delta (T^{-1} \otimes V) \vect(\eta) \right]\right\rbrace\\
    = & \exp\left\lbrace-\frac{1}{2}\left[ -2\vect(C^\top U^\top YT^\top D^{-1})^\top \vect(\eta) + \vect(\eta)^\top (D^{-1/2}\otimes C)^\top (D^{-1/2}\otimes C)\vect(\eta) \right.\right. \\
    & \qquad\qquad\qquad \left.\left.+ \vect(\eta)^\top (T^{-1} \otimes V)^\top  \Delta (T^{-1} \otimes V) \vect(\eta) \right]\right\rbrace\\
    = & \exp\left\lbrace-\frac{1}{2}\left[ -2\vect(C^\top U^\top YT^\top D^{-1})^\top \vect(\eta) + \vect(\eta)^\top (D^{-1}\otimes C^\top C)\vect(\eta) \right.\right. \\
    & \qquad\qquad\qquad \left.\left.+ \vect(\eta)^\top (T^{-1} \otimes V)^\top  \Delta (T^{-1} \otimes V) \vect(\eta) \right]\right\rbrace\\
    = & \exp\left\lbrace-\frac{1}{2}\left[ -2\vect(C^\top U^\top YT^\top D^{-1})^\top \vect(\eta) \right.\right.\\
    & \left.\left. \qquad\qquad\qquad + \vect(\eta)^\top [(D^{-1}\otimes C^\top C)+(T^{-1} \otimes V)^\top  \Delta (T^{-1} \otimes V)]\vect(\eta) \right]\right\rbrace.
\end{align*}
Define $\Theta = [(D^{-1}\otimes C^\top C)+(T^{-1} \otimes V)^\top  \Delta (T^{-1} \otimes V)]^{-1}$. Then
\begin{align}
    \vect(\eta)&|Y,X,\vect(\Lambda),\alpha_b,\delta,\gamma,\tau,\alpha_d,\vect(\omega) \sim N_{pq}\left(\Theta\vect(C^\top U^\top YT^\top D^{-1}), \Theta\right).\label{eq:etapostcond_smn}
\end{align}
For $p < n$, $B$ is sampled directly using equation \eqref{eq:smn_b_pln}. For $p \ge n$, $B$ is obtained by first sampling from equation \eqref{eq:etapostcond_smn} using the algorithm by \cite{bhattacharyaetal2016}:
\begin{enumerate}
    \item Sample $u\sim N(0, (T \otimes V^\top ) \Delta^{-1} (T^\top  \otimes V))$ and $e\sim N(0,I_{nq})$ independently.
    \item Set $v=(D^{-1/2}\otimes C)u + e$.
    \item Solve for $w$ in 
    $$[(D^{-1/2}T \otimes CV^\top )\Delta^{-1}(T^\top D^{-1/2}\otimes VC^\top ) + I_{nq}]w=\vect(U^\top YT^\top D^{-1/2}) - v.$$
    \item Set $\vect(\eta) = u + (T \otimes V^\top )\Delta^{-1}(T^\top D^{-1/2}\otimes VC^\top )w.$
\end{enumerate}

\subsubsection{\texorpdfstring{$\delta_j$}{deltaj}}
Let $\epsilon_{j,k}$ be the latent variable associated with $\delta_{j,k}$. The prior on $\delta_{j,k}$ can be represented as
\begin{align*}
    \delta_{j,k}|\epsilon_{j,k},\tau_{j,k},\gamma_j,\alpha_d&\sim N\left(0,\frac{1}{\epsilon_{j,k}}\bigg(\frac{2\gamma_j}{\tau_{j,k}}\bigg)^{2/\alpha_d}\right),\\
    g(\epsilon_{j,k}|\alpha_d)&\propto\epsilon_{j,k}^{-1/2}\ p_{\alpha_d/2}(\epsilon_{j,k}).
\end{align*}
Define $\epsilon\in\mathbb{R}^{q(q-1)/2}$ to be the vector of latent variables associated with $\delta$. Then
\begin{align}
    & q(\delta,\epsilon|Y,X,B,\Lambda,\alpha_b,\gamma,\tau,\alpha_d)\nonumber\\
    \propto & \prod_{j=2}^q\left[\exp\left\lbrace-\frac{1}{2\gamma_j}\lVert Y^j-XB^j-(Y^{1:(j-1)}-XB^{1:(j-1)})\delta_j\rVert_2^2\right\rbrace\right.\nonumber\\
    &\left.\qquad\quad\cdot\prod_{k=1}^{j-1}\left(\frac{\epsilon_{j,k}\tau_{j,k}^{2/\alpha_d}}{(2\gamma_j)^{2/\alpha_d}}\right)^{1/2}\exp\left\lbrace-\frac{\epsilon_{j,k}\tau_{j,k}^{2/\alpha_d}}{2(2\gamma_j)^{2/\alpha_d}}\delta_{j,k}^2\right\rbrace\epsilon_{j,k}^{-1/2}\ p_{\alpha_d/2}(\epsilon_{j,k})\right].\label{eq:deltaepsilon_smn}
\end{align}
Let $\epsilon_{(j,k)}$ denote the vector $\epsilon$ with $\epsilon_{j,k}$ removed. The posterior conditional for $\epsilon_{j,k}$ corresponding to equation \eqref{eq:deltaepsilon_smn} is
\begin{align*}
    q(\epsilon_{j,k}|Y,X,B,\Lambda,\alpha_b,\delta,\gamma,\tau,\alpha_d,\epsilon_{(j,k)}) \propto\exp\left\lbrace-\frac{\tau_{j,k}^{2/\alpha_d}\delta_{j,k}^2}{2(2\gamma_j)^{2/\alpha_d}}\epsilon_{j,k}\right\rbrace p_{\alpha_d/2}(\epsilon_{j,k}).
\end{align*}
This is an exponentially tilted positive stable distribution and can be sampled from using the algorithm described in \cite{devroye2009}.

Let $Z_j=Y^j-XB^j$, $W_j=Y^{1:(j-1)}-XB^{1:(j-1)}$, $\tau_j=(\tau_{j,1},\ldots,\tau_{j,j-1})^\top $, and $\epsilon_j=(\epsilon_{j,1},\ldots,\epsilon_{j,j-1})^\top $. Define the $(j-1)\times(j-1)$ matrix $\Psi_j$ to be $\Psi_j=\diag[(\epsilon_j\odot\tau_j^{2/\alpha_d})/(2\gamma_j)^{2/\alpha_d}]$, and let $\delta_{(j)}$ denote the vector $\delta$ with $\delta_j$ removed. Then the posterior conditional for $\delta_j$ corresponding to equation \eqref{eq:deltaepsilon_smn} is
\begin{align*}
    &q(\delta_j|Y,X,B,\Lambda,\alpha_b,\delta_{(j)},\gamma,\tau,\alpha_d,\epsilon)\\
    \propto&\exp\left\lbrace-\frac{1}{2\gamma_j}\lVert Z_j-W_j\delta_j\rVert_2^2\right\rbrace\exp\left\lbrace-\frac{1}{2}\delta_j^\top \Psi_j\delta_j\right\rbrace\\
    \propto&\exp\left\lbrace-\frac{1}{2\gamma_j}\left(\delta_j^\top W_j^\top W_j\delta_j-2Z_j^\top W_j\delta_j+\delta_j^\top (\gamma_j\Psi_j)\delta_j\right)\right\rbrace\\
    =&\exp\left\lbrace-\frac{1}{2\gamma_j}\left(\delta_j^\top (W_j^\top W_j+\gamma_j\Psi_j)\delta_j-2Z_j^\top W_j(W_j^\top W_j+\gamma_j\Psi_j)^{-1}(W_j^\top W_j+\gamma_j\Psi_j)\delta_j\right)\right\rbrace,
\end{align*}
so
\begin{align*}
    \delta_j|Y,X,B,\Lambda,\alpha_b,\delta_{(j)},\gamma,\tau,\alpha_d,\epsilon\sim N_{j-1}\left((W_j^\top W_j+\gamma_j\Psi_j)^{-1}W_j^\top Z_j,\gamma_j(W_j^\top W_j+\gamma_j\Psi_j)^{-1}\right).
\end{align*}
Recall that $j=2,\ldots,q$. For $j \le n$, $\delta_j$ is sampled directly.  For $j> n$, the algorithm proposed by \cite{bhattacharyaetal2016} is used:
\begin{enumerate}
    \item Sample $u\sim N(0,\Psi_j^{-1})$ and $e\sim N(0,I_n)$ independently.
    \item Set $v=\frac{1}{\sqrt{\gamma_j}}W_j u + e$.
    \item Solve for $w$ in $(\frac{1}{\gamma_j}W_j\Psi_j^{-1}W_j^\top +I_n)w=\frac{1}{\sqrt{\gamma_j}}Z_j-v$.
    \item Set $\delta_j=u+\frac{1}{\sqrt{\gamma_j}}\Psi_j^{-1}W_j^\top w$.
\end{enumerate}

\section{Additional autocorrelation plots}
Figures \ref{fig:acfcomparison_2} and \ref{fig:acfcomparison_3} display the autocorrelation plots for randomly selected elements of $B$ and $\delta$ in Scenarios 2 and 3 as described in \ref{sec:sim}. These plots are typical of what was observed for $B$ and $\delta$ in each scenario.

\begin{figure}
    \centering
    \includegraphics[width=0.7\textwidth]{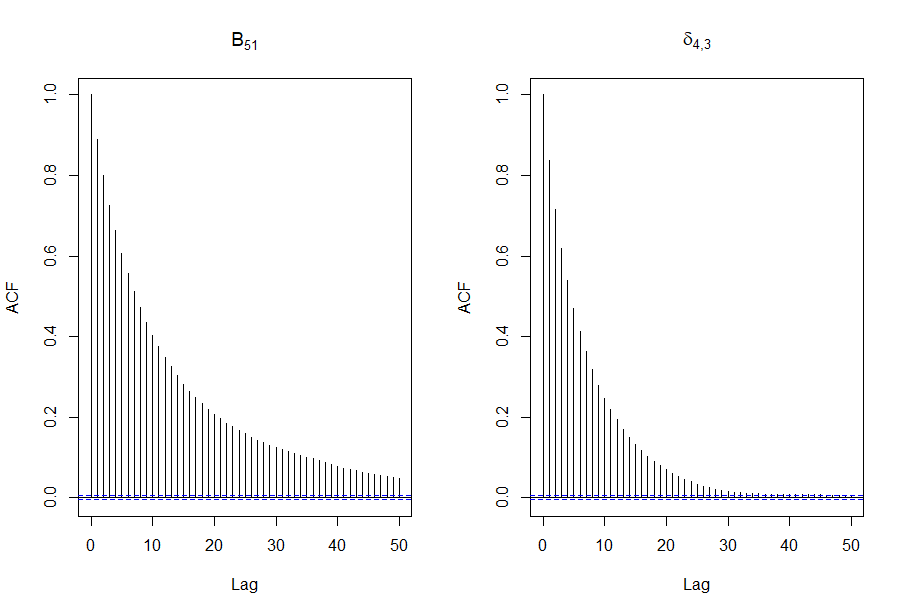}
    \includegraphics[width=0.7\textwidth]{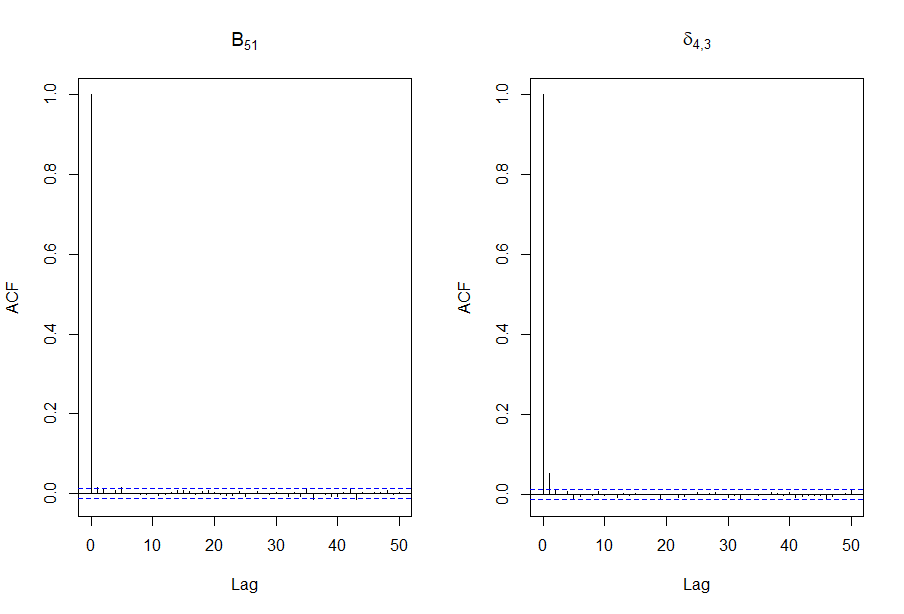}
    \caption{Autocorrelation plots for randomly selected elements of $B$ and $\delta$ in Scenario 2 in Section \ref{sec:sim}. The top plots correspond to GMCB-MH and the bottom plots correspond to GMCB-SMN.}
    \label{fig:acfcomparison_2}
\end{figure}
\FloatBarrier

\begin{figure}
    \centering
    \includegraphics[width=0.7\textwidth]{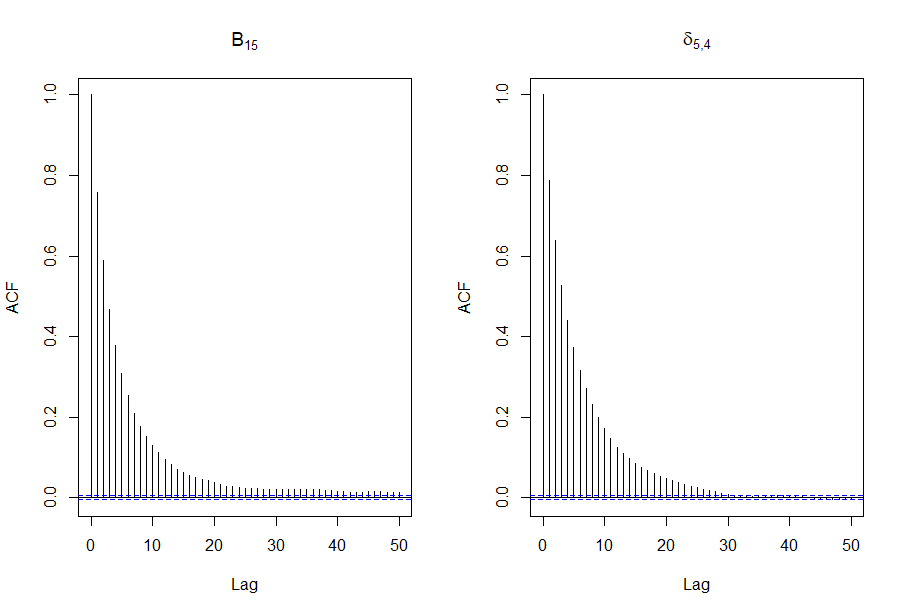}
    \includegraphics[width=0.7\textwidth]{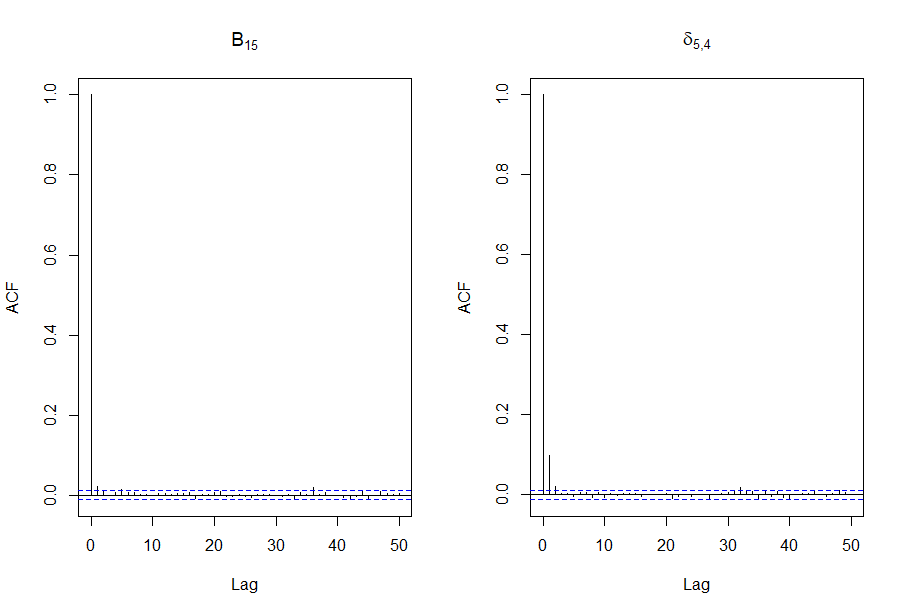}
    \caption{Autocorrelation plots for randomly selected elements of $B$ and $\delta$ in Scenario 3 in Section \ref{sec:sim}. The top plots correspond to GMCB-MH and the bottom plots correspond to GMCB-SMN.}
    \label{fig:acfcomparison_3}
\end{figure}
\FloatBarrier

\section{Posterior Credible Interval Coverage}\label{app:postci}

Table \ref{table:postcicoverage} give the average nominal coverage of the 95\% posterior credible intervals produced by GMCB for Scenarios 1--3 in Section \ref{sec:sim}. 

\begin{table}[!ht]
    \centering
    \caption{Average nominal coverage of 95\% posterior credible intervals computed from 2000 replications and averaged over the entries of $B$ and the entries of $\Omega$.}
    \begin{tabular}{cccc}
    \hline\rule{0pt}{2.5ex}  
    Scenario & Method & Parameter & Nominal Coverage \\ \hline\rule{0pt}{3ex}
    \multirow{4}{*}{1} & \multirow{2}{*}{GMCB-MH} & $B$ & 0.909 \\
    & & $\Omega$ & 0.948 \\ \cline{2-4}\rule{0pt}{3ex}
    & \multirow{2}{*}{GMCB-SMN} & $B$ & 0.910  \\
    & & $\Omega$ & 0.916 \\ \hline
     \multirow{4}{*}{2} & \multirow{2}{*}{GMCB-MH} & $B$ & 0.704 \\
    & & $\Omega$ & 0.963 \\ \cline{2-4}\rule{0pt}{3ex}
    & \multirow{2}{*}{GMCB-SMN} & $B$ & 0.673  \\
    & & $\Omega$ & 0.969 \\ \hline
     \multirow{4}{*}{3} & \multirow{2}{*}{GMCB-MH} & $B$ & 0.917 \\
    & & $\Omega$ & 0.949 \\ \cline{2-4}\rule{0pt}{3ex}
    & \multirow{2}{*}{GMCB-SMN} & $B$ & 0.916  \\
    & & $\Omega$ & 0.934 \\ \hline
    \end{tabular}
    \label{table:postcicoverage}
\end{table}

The posterior credible intervals for $\Omega$ generally have close-to-nominal coverage. However, the nominal coverage for $B$ is generally closer to about 91\%. When $B$ has zero values, the posterior credible intervals produced by GMCB have higher-than-nominal coverage for the true zero values in $B$, but coverage for the true non-zero values tends to be low. The effect of this difference is particularly apparent in Scenario 2, where the number of zeros and non-zero values is approximately the same. Although Scenario 3 also has a sparse $B$, there are only 3 non-zero values, for which one of the credible intervals has close-to-nominal coverage.

\end{appendix}

\end{document}